\tikzset{initial text={}}
\tikzset{initial distance=3mm}
\tikzset{every picture/.style=semithick} 
\tikzset{>=stealth'} 
\tikzset{-} 
\algnewcommand\algorithmicswich{\textbf{switch}}
\algnewcommand\algorithmiccase{\textbf{case}}
\algnewcommand\algorithmicassert{\texttt{assert}}
\algnewcommand\Assert[1]{\State \algorithmicassert(#1)}%
\newcommand{\impossible}{-}
\newcommand{\threeval}{\mu_\pi}
\newcommand{\countsem}{d_\pi}
\DeclareMathOperator{\swap}{\sim}
\newcommand{\inc}{\oplus 1}
\newcommand{\minmax}{\sqcup}
\newcommand{\maxmin}{\sqcap}
\newcommand{\setNpos}{\mathbb{N}_{>0}}
\newcommand{\predictive}{e_\pi}
\newcommand{\fsem}{r_\pi}
\newcommand{\ptrue}{\top\!_P}
\newcommand{\pfalse}{\bot_P}
\newcommand{\inconclusive}{{?}}
\newcommand{\prediction}{\text{pred}}
\begin{document}

\title{A Counting Semantics for Monitoring  
LTL Specifications over Finite Traces}
\author{E. Bartocci\inst{1}, R. Bloem\inst{2}, D. Nickovic\inst{3} and F. Roeck\inst{2}}

\institute{
$^1$ Vienna University of Technology, Vienna, Austria \\
$^2$ Graz University of Technology, Graz,  Austria \\
$^3$ Austrian Institute of Technology GmbH, Vienna, Austria
} 

\maketitle

\begin{abstract}
We consider the problem of monitoring a Linear Time Logic (LTL)
specification that is defined on infinite paths, over finite traces.  
For example, we may need to draw a verdict on whether
the system satisfies or violates the property ``p holds
infinitely often.''  The problem is that there is
always a continuation of a finite trace that satisfies the
property and a different continuation that violates it.

We propose a two-step approach to address this problem. First, we
introduce a counting semantics that computes the number of steps
to witness the satisfaction or violation of a formula for each
position in the trace.  Second, we use this information to make a
prediction on inconclusive suffixes.  In particular, we consider a
\emph{good} suffix to be one that is shorter than the longest witness
for a satisfaction, and a \emph{bad} suffix to be shorter than or
equal to the longest witness for a violation. Based on this
assumption, we provide a verdict assessing whether a continuation of
the execution on the same system will presumably satisfy or violate
the property.
\end{abstract}


\section{Introduction}
\label{sec:introduction}



Alice is a verification engineer and she is presented with a new exciting and complex design.  
The requirements document coming with the design already incorporates functional requirements 
formalized in Linear Temporal Logic (LTL)~\cite{DBLP:conf/focs/Pnueli77}. 
The design contains features that are very challenging for exhaustive verification and her 
favorite model checking tool does not terminate in reasonable time. 

\paragraph{Runtime Verification.}
Alice decides to tackle this problem using runtime verification (RV),
 a light, yet rigorous verification method. 
RV drops the exhaustiveness of model checking and analyzes individual traces generated by 
the system. Thus, it scales much better to the industrial-size designs.  RV
can be directly applied to the design, and does not require its abstract model. This method enables  
automatic generation of monitors from formalized requirements and thus provides a systematic way to check 
whether the system executions satisfy or violate the specification.  

\paragraph{Motivating Example.}
During her RV activities, Alice comes across the following unbounded response requirement:
$$
\psi \equiv \textsf{G(request $\rightarrow$ F grant)}
$$
\noindent This formula says that every request coming from the environment must be granted by the design in some finite (but unbounded) future. 
Alice realizes that she is trying to check a \emph{liveness} property over a set of \emph{finite} traces. 
She looks closer at the executions and identifies the two interesting
examples trace $\tau_{1}$ and trace $\tau_{2}$, depicted in Table~\ref{tab:response}.

\begin{wraptable}{r}{0.4\textwidth}
\captionsetup{width=0.4\textwidth}
\vspace{-10mm}
            \centering
\caption{Unbounded response property example.}
\begin{tabular}{c|r|ccccccc}
\toprule
\textsf{trace}	& \textsf{time}		 &  1	& 2	& 3	& 4	& 5	& 6	& 7	 \\
\midrule
$\tau_{1}$   & \textsf{request} 	& $\top$ & $-$ & $-$ & $\top$ & $-$ & $-$ & $-$  \\
	          & \textsf{grant}	&  $-$ & $-$ & $\top$ & $-$ & $-$ & $-$ & $-$  \\

\midrule

$\tau_{2}$    & \textsf{request} 	& $\top$ & $-$ & $-$ & $\top$ & $-$ & $-$ & $\top$  \\
                      & \textsf{grant}	&  $-$ & $-$ & $\top$ & $-$ & $-$ & $\top$ & $-$\\
\bottomrule
\end{tabular}
\label{tab:response}

\vspace{-3mm}
 \end{wraptable}

The runtime verification tool reports that both $\tau_{1}$ and $\tau_{2}$ presumably violate the unbounded response property. This verdict is 
against Alice's intuition. The evaluation of trace $\tau_{1}$ seems right to her -- the request at Cycle $1$ is followed by a grant at 
Cycle $3$, however the request at Cycle $4$ is never granted during that execution. There are good reasons to suspect a bug in the 
design. Then she looks at $\tau_{2}$ and observes that after every \textsf{request} the \textsf{grant} is 
given exactly after $2$ cycles.  It is true that the last request at Cycle $7$ is not followed by a grant, but this seems to happen 
because the execution ends at that cycle -- the past trace observations give reason to think that this request would be followed by a 
grant in cycle $9$ if the execution was continued. Thus, Alice is not satisfied by the second verdict.

Alice looks closer at the way that the LTL property is evaluated over finite traces. She finds out that temporal operators are given {\em strength} -- {\em eventually} and 
{\em until} are declared as {\em strong} operators, while {\em always} and {\em weak until} are defined to be {\em weak}~\cite{DBLP:conf/cav/EisnerFHLMC03}.  
A strong temporal operator requires all outstanding obligations to be met before the end of the trace. In contrast, a weak temporal operator 
must not witness any outstanding obligation violation before the end of the trace. Under this interpretation, both $\tau_{1}$ and $\tau_{2}$ 
violate the unbounded response property.

Alice explores another popular approach to evaluate future temporal properties over finite traces -- the $3$-valued semantics for LTL~\cite{DBLP:conf/fsttcs/BauerLS06}. In this setting, the Boolean 
set of verdicts is extended with a third \textsf{unknown} (or \textsf{maybe}) value. A finite trace satisfies (violates) the  
$3$-valued LTL formula if and only if all the infinite extensions of the trace satisfy (violate) the same LTL formula under its classical interpretation. In all other 
cases, we say that the satisfaction of the formula by the trace is \textsf{unknown}. 
Alice applies the $3$-valued interpretation of LTL on the traces $\tau_{1}$ and $\tau_{2}$ to evaluate the unbounded response property. 
In both situations, she ends up with the \textsf{unknown} verdict. Once again, this is not 
what she expects and it does not meet her intuition about the satisfaction of the formula by the observed traces.

Alice desires a semantics that evaluates LTL properties on finite traces by taking previous observations into account.

\paragraph{Contributions.}
In this paper, we study the problem of LTL evaluation over finite traces encountered by Alice and propose a solution. 
We introduce a new counting semantics for LTL that takes into 
account the intuition illustrated by the example from Table~\ref{tab:response}. This semantics computes for every position of a trace two values -- 
the distances to the nearest satisfaction and violation of 
the co-safety, respectively safety, part of the specification. We use this quantitative information 
to make \emph{predictions} about the (infinite) suffixes of the finite observations.  We infer from these values the maximum time 
that we expect for a future obligation to be fulfilled. We compare it to the value that we have for an open obligation at the 
end of the trace. If the latter is greater (smaller) than the expected maximum value, we have a good indication of a \emph{presumed violation (satisfaction)} that 
we report to the user.
In particular, our approach will indicate that $\tau_{1}$ is likely to 
violate the specification and should be further inspected. In contrast, it will evaluate that $\tau_{2}$ most likely satisfies the unbounded response property.
\paragraph{Organization of the paper.}  The rest of the paper is organized as follows.  
We discuss the related work in Section 2 and we provide the preliminaries in Section 3. 
In Section 4 we present our new counting semantics for LTL, while in Section 5 
we show how to make \emph{predictions} about the (infinite) suffixes of the finite observations. 
Section 6 shows the application of our approach to some examples. 
Finally in Section 7 we draw our conclusions.



\section{Related Work}
\label{sec:relative}




The finitary interpretation of LTL was first considered in~\cite{DBLP:books/daglib/0077033}, where the authors propose to enrich the logic with the 
{\em weak} next operator that is dual to the (strong) next operator defined on infinite traces. While the strong next requires the existence of a next state, the 
weak next trivially evaluates to true at the end of the trace. 
%
%
%
In~\cite{DBLP:conf/cav/EisnerFHLMC03}, the authors propose a more semantic approach with 
{\em weak} and {\em strong} views for evaluating future obligations at the end of the 
trace. In essence the empty word satisfies (violates) every formula according to the weak (strong) view.
These two approaches result in the violation of the specification $\psi$ by both traces $\tau_{1}$ and $\tau_{2}$.  

The authors in~\cite{DBLP:conf/fsttcs/BauerLS06} propose a $3$-valued finitary LTL interpretation of LTL, in which 
the set $\{ \textsf{true}, \textsf{false} \}$ of verdicts is extended with a third $\textsf{inconclusive}$ verdict. According to the 
$3$-valued LTL, a finite trace satisfies (violates) a specification iff all its infinite extensions satisfy (violate) the same 
property under the classical LTL interpretation. Otherwise, it evaluates to $\textsf{inconclusive}$. The main disadvantage of 
the $3$-valued semantics is the dominance of the $\textsf{inconclusive}$ verdict in the evaluation of many interesting 
LTL formulas. In fact, both $\tau_{1}$ and $\tau_{2}$ from Table~\ref{tab:response} evaluate to $\textsf{inconclusive}$ 
against the unbounded response specification $\psi$.


In~\cite{DBLP:conf/rv/0002LS07}, the authors combine the weak and strong operators with the $3$-valued semantics 
to refine the $\textsf{inconclusive}$ with $\{ \textsf{presumably true}, \textsf{presumably false} \}$.  The strength of  the remaining future 
obligation dictates the presumable verdict. The authors in~\cite{DBLP:conf/lpar/MorgensternGS12} propose a 
finitary semantics for each of the LTL (safety, liveness, persistence and 
recurrence) hierarchy classes that asymptotically converges to the infinite traces semantics of the logic.
In these two works, the specification $\psi$ also evaluates to 
the same verdict for both the traces $\tau_{1}$ and $\tau_{2}$.

To summarize, none of the related work handles the unbounded response example from Table~\ref{tab:response} in a satisfactory 
manner. This is due to the fact that these approaches decide about the verdict based on the specification and its remaining future 
obligations at the end of the trace. In contrast, we propose an approach in which the past observations within the trace are used to 
predict the future and derive the appropriate verdict. In particular, the application of our semantics for the evaluation 
of $\psi$ over $\tau_{1}$ and $\tau_{2}$ results in $\textsf{presumably true}$ and $\textsf{presumably false}$ verdicts.


In~\cite{DBLP:conf/nfm/ZhangLD12}, the authors propose another predictive semantics for LTL. In essence, this work assumes that 
at every point in time the monitor is able to precisely predict a segment of the trace that it has not observed yet and produce its outcome 
accordingly. In order to ensure such predictive power, this approach requires a white-box setting in which instrumentation and some 
form of static analysis of the systems are needed in order to foresee in advance the upcoming observations. This is in contrast to our 
work, in which the monitor remains a passive participant and predicts its verdict only based on the past observations. 

In a different research thread~\cite{DBLP:conf/ictac/ViswanathanK04}, the authors introduce the notion of 
{\em monitorable} specifications that can be positively or negatively determined by a finite trace. The 
monitorability of LTL is further studied in~\cite{PnueliZ06,DBLP:journals/tosem/BauerLS11}. This classification 
of specifications is orthogonal to our work. We focus on providing a sensible evaluation to all LTL properties, 
including the non-monitorable ones (e.g., $\always \! \eventually p$).


We also mention the recent work on statistical model checking for LTL~\cite{DBLP:conf/tacas/DacaHKP16}. In this work, the 
authors assume a gray-box setting, where the system-under-test (SUT) is a Markov chain with the known minimum transition probability. 
This is in contrast to our work, in which we passively observe existing finite traces generated by the SUT, i.e., we have a blackbox setting.

In~\cite{AlmagorBK14}, the authors 
propose extending LTL with a discounting operator and study the properties of the augmented logic. The LTL specification 
formalism is extended with path-accumulation assertions in~\cite{Boker:2014}. These LTL extensions are 
motivated by the need for a more quantitative and refined analysis of the systems. In our work, the motivation for the 
counting semantics is quite different. We use the quantitative information that we collect during the execution of the 
trace to predict the future behavior of the system and thus improve the quality of the monitoring verdict.


\section{Preliminaries}
\label{sec:prelim}

We first introduce {\em traces} and Linear Temporal Logic (LTL) that we interpret over $3$-valued semantics.

\begin{definition}[Trace]
Let $P$ a finite set of {\em propositions} and let $\Pi=2^P$. 
A (finite or infinite) {\em trace} $\pi$ is a sequence $\pi_{1}, \pi_{2}, \ldots \in \Pi^* \cup \Pi^\omega$ . 
We denote by $|\pi| \in \mathbb{N} \cup \{ \infty \}$ the {\em length} of $\pi$.
We denote by $\pi \cdot \pi'$ the concatenation of $\pi \in \Pi^*$ and $\pi' \in \Pi^* \cup \Pi^\omega$.
\end{definition}

\begin{definition}[Linear Temporal Logic]
In this paper, we consider linear temporal logic (LTL) and we define its syntax by the grammar:
$$
\phi := p~|~\neg \phi~|~\phi_{1} \vee \phi_{2}~|~\nextt \phi~|~\phi_{1} \until \phi_{2},
$$
\noindent where $p \in P$. We denote by $\Phi$ the set of all LTL formulas.

\end{definition}

%
%
%
%
%
%
%
%

 From the basic definition we can derive other standard 
 Boolean and temporal operators  as follows: 
 \vspace{-2ex} $$\top = p \lor \neg p, \mbox{    } \bot=\neg \top, \mbox{    }  \phi \wedge \psi = \neg (\neg \phi \vee \neg \psi), \mbox{    }  \eventually \phi = \top \until \phi,  \mbox{    } \always \phi = \neg \eventually \neg \phi $$
 
Let $\pi \in \Pi^{\omega}$ be an infinite trace and $\phi$ an LTL formula. 
The satisfaction relation $(\pi, i) \models \phi$ is defined inductively as 
follows
$$
\begin{array}{lcl}
(\pi, i) \models p & \textrm{iff} & p \in \pi_{i}\text{,} \\
(\pi, i) \models \neg \phi & \textrm{iff} & (\pi, i) \not \models \phi \text{,}\\
(\pi, i) \models \phi_{1} \vee \phi_{2} & \textrm{iff} & (\pi, i) \models \phi_{1} \; \textrm{or} \; (\pi, i) \models \phi_{2}  \text{,}\\
(\pi, i) \models \nextt \phi & \textrm{iff} & (\pi, i+1) \models \phi \text{,}\\
(\pi, i) \models \phi_{1} \until \phi_{2} & \textrm{iff} & \exists j \geq i \; \textrm{s.t.}\; (\pi, j) \models \phi_{2} \: \textrm{and} \;
\forall i \leq k < j, (\pi, k) \models \phi_{1} \text{.}\\
\end{array}
$$ 

We now recall the $3$-valued semantics from~\cite{DBLP:conf/fsttcs/BauerLS06}. We denote by $[\pi \models_{3} \phi]$ the evaluation of $\phi$ with respect to the 
trace $\pi \in \Pi^{*}$ that yields a value in $\{ \top, \bot, \inconclusive \}$. 

$$
\begin{array}{lll}
[\pi \models_{3} \phi]
				& =	& 

\begin{split}
  \begin{cases}
    \top & \forall \pi' \in \Pi^{\omega}, \pi \cdot \pi' \models \phi\text{,} \\
    \bot & \forall \pi' \in \Pi^{\omega}, \pi \cdot \pi' \not \models \phi\text{,} \\
    \inconclusive & \text{otherwise}\text{.} \\
    \end{cases}  
\end{split}

\\ 

\end{array}
$$

We now restrict LTL to a fragment without explicit $\top$ and $\bot$ symbols and with the explicit $\eventually$ operator that we add to the 
syntax. We provide an alternative $3$-valued semantics for this fragment, denoted by $\threeval(\phi, i)$ where $i \in \setNpos$ indicates a position in or outside the trace. We 
assume the order $\bot < ? < \top$, and extend the Boolean operations to the $3$-valued domain with the rules 
 $\neg_3 \top = \bot$, $\neg_3 \bot = \top$ and $\neg_3 ? = ?$ and $\phi_1 \lor_3 \phi_2 = max(\phi_1,\phi_2)$. We define the semantics inductively as follows:
 
$$
\begin{array}{lll}
\threeval(p, i)					& =	& 

\begin{split}
  \begin{cases}
    \top & \text{if } i \leq |\pi| \; \textrm{and} \; p \in \pi_{i}\text{,} \\
    \bot & \text{else if } i \leq |\pi| \; \textrm{and} \; p \not \in \pi_{i}\text{,} \\
    \inconclusive & \text{otherwise}\text{,} \\
    \end{cases}  
\end{split}

\\ 

\threeval(\neg \phi, i)				& =	& \neg_3 \threeval(\phi,  i)\text{,} \\
\threeval(\phi_{1} \vee \phi_{2}, i)		& =	& \threeval(\phi_{1},  i) \vee_3 \threeval(\phi_{2},  i)\text{,} \\
\threeval(\nextt \phi,  i)			& =	& \threeval(\phi,  i+1)\text{,} \\
\threeval(\eventually \phi, i)			& =	& 
\begin{split}
  \begin{cases}
    \threeval(\phi, i) \vee_3 \threeval(\nextt \! \eventually \phi, i)  & \text{if } i \leq |\pi|\text{,} \\
    \threeval(\phi,  i)  & \text{if } i > |\pi|\text{,} \\
    \end{cases}  
\end{split} \\
\threeval(\phi_{1} \until \phi_{2}, i)		& =	& 
\begin{split}
  \begin{cases}
    \threeval(\phi_{2},  i) \vee_3 (\threeval(\phi_{1},  i) \wedge_3 \threeval(\nextt (\phi_{1} \until \phi_{2}), i))  & \text{if } i \leq |\pi| \text{,}\\
    \threeval(\phi_{2},  i)  & \text{if } i > |\pi| \text{.}\\
    \end{cases}  
\end{split}\\
\end{array}
$$


\noindent 
We note that the adapted semantics allows evaluating a finite trace in
polynomial time, in contrast to $[\pi \models_{3} \phi]$, which
requires a $\textsc{PSPACE}$-complete algorithm. This improvement in
complexity comes at a price -- the adapted semantics cannot
semantically characterize tautologies and contradiction. We have for
example that $\threeval(p \vee \neg p, 1)$ for the empty word
evaluates to $\inconclusive$, despite the fact that $p \vee \neg p$ is
semantically equivalent to $\top$.  The novel semantics that we
introduce in the following sections make the same tradeoff.

In the following lemma, we relate the two three-valued semantics.

\begin{lemma}
\label{lemma:soundness}
Given an LTL formula and a trace $\pi \in \Pi^{*}$, $|\pi| \neq 0$, we have that
$$
\begin{array}{lll}
\threeval(\phi,1) = \top & \Rightarrow & [\pi \models_{3} \phi] = \top \text{,}\\
\threeval(\phi,1) = \bot & \Rightarrow & [\pi \models_{3} \phi] = \bot \text{.}\\
\end{array} 
$$
\end{lemma}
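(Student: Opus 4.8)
The plan is to proceed by structural induction on $\phi$, proving the slightly stronger statement that relates the adapted semantics $\threeval$ to quantification over infinite extensions. Concretely, I would show by induction on $\phi$ that for every $\pi \in \Pi^*$ with $|\pi| \neq 0$ and every position $i$ with $1 \leq i \leq |\pi|$: if $\threeval(\phi, i) = \top$ then $(\pi \cdot \pi', i) \models \phi$ for all $\pi' \in \Pi^\omega$, and if $\threeval(\phi, i) = \bot$ then $(\pi \cdot \pi', i) \not\models \phi$ for all $\pi' \in \Pi^\omega$. Taking $i = 1$ and unfolding the definition of $[\pi \models_3 \phi]$ then yields the lemma. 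By the duality $\threeval(\neg\phi, i) = \neg_3 \threeval(\phi, i)$ and the definition of $\models$ on negation, the $\bot$ case follows from the $\top$ case applied to $\neg\phi$ (and vice versa), so it suffices to carry the $\top$ direction through the induction, handling both claims simultaneously at each step.

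The base case is $\phi = p$: if $\threeval(p,i) = \top$ then $i \leq |\pi|$ and $p \in \pi_i$, hence $p \in (\pi\cdot\pi')_i$ for every extension, so $(\pi\cdot\pi', i) \models p$; dually for $\bot$. For $\phi = \neg\psi$ and $\phi = \psi_1 \vee \psi_2$ the argument is immediate from the induction hypothesis together with the definitions of $\neg_3, \vee_3$ and of $\max$ on the order $\bot < ? < \top$: e.g. $\threeval(\psi_1 \vee \psi_2, i) = \top$ forces at least one disjunct to evaluate to $\top$. The delicate operator is $\nextt$: $\threeval(\nextt\psi, i) = \threeval(\psi, i+1)$, and if $i = |\pi|$ then $i+1$ lies \emph{outside} the trace, so $\threeval(\psi, i+1)$ is evaluated on positions past $|\pi|$. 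Here $\threeval(p, i+1) = \inconclusive$, and one checks that $\inconclusive$ propagates upward: a conclusive ($\top$ or $\bot$) value can never be produced purely from positions outside the trace (this is a small auxiliary sub-induction, or can be folded into the main one by strengthening the IH to allow $i > |\pi|$, in which case the conclusion "$\threeval(\phi,i) \in \{\top,\bot\}$ is impossible" is what we prove). So whenever $\threeval(\nextt\psi, i)$ is conclusive we must have $i+1 \leq |\pi|$, the IH applies, and the semantics of $\nextt$ matches.

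For $\eventually$ and $\until$ the point is that the recursive unfolding in $\threeval$ bottoms out: $\threeval(\eventually\psi, i)$ for $i \leq |\pi|$ expands to a finite disjunction $\bigvee_{i \leq j \leq |\pi|} \threeval(\psi, j)$ together with the out-of-range term $\threeval(\psi, |\pi|+1)$ which, by the out-of-range observation, contributes only $\inconclusive$; similarly $\threeval(\psi_1 \until \psi_2, i)$ expands to a finite formula over positions $i, \ldots, |\pi|+1$. Thus $\threeval(\eventually\psi, i) = \top$ iff $\threeval(\psi, j) = \top$ for some $j \in \{i,\ldots,|\pi|\}$, which by IH gives a witness $j$ with $(\pi\cdot\pi', j) \models \psi$ for all $\pi'$, hence $(\pi\cdot\pi', i) \models \eventually\psi$; and $\threeval(\eventually\psi, i) = \bot$ iff $\threeval(\psi, j) = \bot$ for all $j \in \{i,\ldots,|\pi|\}$ and the out-of-range term is $\bot$ — but the out-of-range term is always $\inconclusive$, never $\bot$, so in fact $\threeval(\eventually\psi,i)$ is never $\bot$, and the $\bot$-implication is vacuously true (consistent with the fact that any finite trace has an extension satisfying $\eventually\psi$ — wait, not quite: one must instead argue it simply never equals $\bot$, so there is nothing to prove). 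The $\until$ case is analogous but needs the additional bookkeeping that $\threeval(\psi_1, k) = \top$ for the prefix positions $k < j$; this is exactly mirrored by the "$\forall i \leq k < j$" clause in $\models$. I expect this $\until$ case, combined with getting the out-of-range invariant precisely right, to be the main obstacle — everything else is routine propagation of the induction hypothesis.
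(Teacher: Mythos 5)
Your proposal is correct and follows essentially the same route as the paper: structural induction on $\phi$, strengthened to all positions $i \leq |\pi|$, with the same case analysis (including the observation that $\threeval(\eventually\psi,i)$ can never equal $\bot$, so that implication is vacuous). The one point where you are more careful than the paper is the explicit auxiliary invariant that $\threeval(\phi,i)=\inconclusive$ for all $i>|\pi|$, which the paper's $\nextt$, $\eventually$ and $\until$ cases use implicitly without stating.
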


\begin{proof} These two statements can be proven by induction on the structure of the LTL formula 
(see  Appendix~\ref{proof:lemma:soundness}). $ [\pi \models_{3} \phi] = \; \inconclusive \Rightarrow \threeval(\phi,1) = \; \inconclusive $ is the consequence of the first two.

\end{proof}

\section{Counting Finitary Semantics for LTL}
\label{sec:counting}

In this section, we introduce the counting semantics for LTL. We first provide necessary 
definitions in Section~\ref{sec:def}, we present the new semantics in Section~\ref{sec:sem} 
and finally propose a predictive mapping that transforms the counting semantics into a 
qualitative $5$-valued verdict in Section~\ref{sec:evaluation}.


\subsection{Definitions}
\label{sec:def}

Let $\mathbb{N}_{+} = \mathbb{N}_0 \cup \{ \infty, \impossible \}$ be the set of {\em natural} 
numbers (incl. 0) extended with the two special symbols $\infty$ ($\textsf{infinite}$) and $\impossible$ ($\textsf{impossible}$) such that  
$\forall n \in \mathbb{N}_0$, we define $n < \infty < \impossible$. We define the addition $\oplus$ of two elements $a,b \in \mathbb{N}_{+}$ 
as follows.

\begin{definition}[Operator $\oplus$]
 We define the binary operator $\oplus: \mathbb{N}_{+} \times \mathbb{N}_{+} \rightarrow \mathbb{N}_{+}$ 
s. t. for $a \oplus b$ with $a, b \in  \mathbb{N}_{+}$ we have $ a + b$ if  $a,b \in \mathbb{N}_0$ and $\max\{ a,b \}$ otherwise.
\end{definition}

We denote by $(s,f)$ a pair of two extended numbers $s,f \in \mathbb{N}_{+}$. In Definition~\ref{def:pairs}, we introduce several operations on 
pairs: (1) the {\em swap} between the two values ($\swap$), (2) the increment by $1$ of both values ($\inc$), (3) the {\em minmax} binary operation ($\minmax$)
that gives the pair consisting of the minimum first value and the maximum second value, and (4) the {\em maxmin} binary operation ($\maxmin$) that is 
symmetric to  ($\minmax$).

Definition~\ref{def:counting_finitary} introduces the counting semantics for LTL that for a finite trace 
$\pi$ and LTL formula $\phi$ gives a pair $(s,f) \in \mathbb{N}_{+} \times \mathbb{N}_{+}$. 
We call $s$ and $f$ {\em satisfaction} and {\em violation witness counts}, respectively. 
Intuitively, the $s$ ($f$) value denotes the minimal number of additional steps that is needed to witness the satisfaction (violation) 
of the formula. The value $\infty$ is used to denote that the property can be satisfied (violated) only in an infinite 
number of steps, while $\impossible$ means the property cannot be satisfied (violated) by any continuation of the trace.


\begin{definition}[Operations $\swap$, $\inc$, $\minmax$, $\maxmin$]
\label{def:pairs}
\noindent Given two pairs $(s,f) \in \mathbb{N}_+ \times \mathbb{N}_+$ and $(s',f')\in \mathbb{N}_+ \times \mathbb{N}_+$, we have:

$$
\begin{array}{rcl}
\swap (s,f) & = &  (f,s)\text{,}  \\
(s,f) \inc & = &  (s \inc ,f \inc) \text{,}\\
(s,f)  \minmax (s',f') & = & (\min(s,s'), \max(f,f')) \text{,}\\
(s,f)  \maxmin (s',f')& = & (\max(s,s'), \min(f,f')) \text{.} \\
\end{array}
$$
\end{definition}

\begin{example}
\noindent Given the pairs $(0,0)$, $(\infty,1)$ and
$(7,-)$ we have the following:
$$
\begin{array}{rclrcl}
\swap (0,0) & = &  (0,0)\text{,} \;\;\;\;& \swap (\infty,1) & = &  (1,\infty)\text{,}  \\
(0,0) \inc & = &  (1,1) \text{,} \;\;\;\;& (\infty,1) \inc & = &  (\infty,2) \text{,} \\
(0,0) \minmax (\infty,1) & = &  (0,1)\text{,}\;\; \;\;& (\infty,1) \minmax (7,-) & = &  (7,-)\text{,} \\
(0,0) \maxmin (\infty,1) & = &  (\infty,0)\text{,} \;\;\;\;& (\infty,1) \maxmin (7,-) & = &  (\infty,1)\text{.}  \\
\end{array}
$$
\end{example}

\noindent \textbf{Remark.} Note that $\mathbb{N}_+ \times \mathbb{N}_+$  forms a lattice where $(s,f) \unlhd  (s',f')$ when $s \geq s'$  and  $f \leq f'$ with join $\minmax$ and meet $\maxmin$. Intuitively, larger values are closer to true.

\subsection{Semantics}
\label{sec:sem}

We now present our finitary semantics.

\begin{definition}[Counting finitary semantics] 
\label{def:counting_finitary}
 Let $\pi \in \Pi^{*}$ be a finite trace, $i \in \setNpos$ be a position in or outside the trace and $\phi \in \Phi$
 be an LTL formula.
We define the counting finitary semantics of LTL as the 
function\\ $\countsem~:~\Phi \times \Pi^{*} \times \setNpos \rightarrow 
\mathcal{P}(\mathbb{N}_{+} \times \mathbb{N}_{+})$  such that:

{
$$
\begin{array}{lcl}
\countsem(p,i) & = & \begin{cases}
 (0, -)  & \text{if } i \leq |\pi| \wedge p \in \pi_{i}\text{,} \\
 (-, 0)  & \text{if } i \leq |\pi| \wedge p \not \in \pi_{i}\text{,} \\
 (0,0)  & \text{if } i > |\pi| \text{,} \\
\end{cases} \\
\countsem(\neg \phi,i) & = & \swap \countsem(\phi, i )\text{,} \\
\countsem(\phi_{1} \vee \phi_{2},i) & = & \countsem(\phi_{1}, i) \minmax \countsem(\phi_{2},  i)\text{,} \\
\countsem(\nextt \phi,i) & = & \countsem(\phi,  i + 1) \inc\text{,} \\
\countsem( \phi \until \psi, i) & = & \begin{cases}
                                      \countsem(\psi, i) \minmax   
                                       \Big( \countsem(\phi, i) \maxmin  \countsem(\nextt (\phi \until \psi), i )   \Big)  &  \text{if  } i \leq |\pi|\text{,} \\      
                                      \countsem(\psi, i)  \minmax \Big( \countsem(\phi, i)  \maxmin  (-, \infty) \Big) & \text{if  }  i >  |\pi|\text{,} \mbox{ }\\    
                                   \end{cases}\\
                                   
\countsem( \eventually \phi, i) & = & \begin{cases}
                                     \countsem(\phi,  i)  \minmax  \countsem(\nextt \! \eventually \phi, i )     &  \text{if  } i \leq |\pi|\text{,} \\      
                                     \countsem(\phi,  i) \minmax  (-, \infty)   & \text{if  }  i >  |\pi|\text{.} \mbox{ }\\    
                                   \end{cases}\\
\end{array}
$$}
\end{definition}

We now provide some motivations behind the above definitions. 
\begin{description}
\item[Proposition] A proposition is either evaluated before or after the end of the trace. If it is evaluated before the end of the trace and 
the proposition holds, the satisfaction and violations witness counts are trivially $0$ and $\impossible$, respectively. In the case that the 
proposition does not hold, we have the symmetric witness counts. Finally, we take an optimistic view in case of evaluating a 
proposition after the end of the trace: The trace can be extended to a trace with $i$ steps s.t. either $p$ holds or $p$ does not hold.
\item[Negation] Negating a formula simply swaps the witness counts. If we witness the satisfaction of $\phi$ in $n$ steps, we witness 
the violation of $\neg \phi$ in $n$ steps, and vice versa.
\item[Disjunction] 
We take the shorter satisfaction witness count, because the satisfaction of one subformula is enough to satisfy the property. And we take the longer violation witness count, because both subformulas need to be violated to violate the property.
\item[Next] The next operator naturally increases the witness counts by one step.
\item[Eventually] We use the rewriting rule $\eventually \phi \equiv \phi \vee \nextt \! \eventually \phi$ to define the semantics of the 
eventually operator. When evaluating the formula after the end of the trace, we replace the remaining obligation 
($\nextt \! \eventually \phi$) by $(-,\infty)$.
Thus, $\eventually \phi$ evaluated on the empty word is satisfied by a suffix that satisfies $\phi$, and it is violated 
only by infinite suffixes.
\item[Until] We use the same principle for defining the until semantics that we used for the eventually operator. We use the rewriting rule $\phi \until \psi \equiv \psi \lor (\phi \land \nextt (\phi \until \psi) )$. On the empty word, $\phi \until \psi$ is satisfied (in the shortest way) by a suffix that satisfies $\psi$, and it is violated by a suffix that violates both $\phi$ and $\psi$. 
\end{description}

\begin{example}

We refer to our motivating example from Table~\ref{tab:response} and evaluate the trace $\tau_{2}$ with respect to the 
specification $\psi$. We present the outcome in Table~\ref{tab:response-sem}. We see that every proposition evaluates to 
$(0,-)$ when true. The satisfaction of a proposition that holds at time $i$ is immediately witnessed and 
it cannot be violated by any suffix. Similarly, a proposition evaluates to $(-,0)$ when false. The valuations of $\eventually g$ 
count the number of steps to positions in which $g$ holds. For instance, the first time at which $g$ holds is $i=3$, hence 
$\eventually g$  evaluates to $(2,-)$ at time $1$, $(1,-)$ at time $2$ and $(0,-)$ at time $3$. We also note that 
$\eventually g$ evaluates to $(0,\infty)$ at the end of the trace -- it could be immediately satisfied with the continuation of the 
trace with $g$ that holds, but could be violated only by an infinite suffix in which $g$ never holds. We finally observe that 
$\always(r \rightarrow \eventually g)$ evaluates to $(\infty,\infty)$ at all positions -- the property can be both satisfied and violated 
only with infinite suffixes.

\begin{table}[tb]
\centering
\caption{Unbounded response property example: $\countsem(\phi, i)$ with the trace $\pi = \tau_{2}$.}
   {\scriptsize
\begin{tabular}{r|cccccccc}
\toprule
	 &  1	& 2	& 3	& 4	& 5	& 6	& 7	  & EOT \\
\midrule

$r$    & $\top$ & $-$ & $-$ & $\top$ & $-$ & $-$ & $\top$ & \\
$g$     &  $-$ & $-$ & $\top$ & $-$ & $-$ & $\top$ & $-$ &  \\
\bottomrule
$\countsem(r,i)$ & $(0,-)$ & $(-,0)$ & $(-,0)$ & $(0,-)$ & $(-,0)$ & $(-,0)$ & $(0,-)$ &  (0,0) \\
$\countsem(g,i)$ & $(-,0)$ & $(-,0)$ & $(0,-)$ & $(-,0)$ & $(-,0)$ & $(0,-)$ & $(-,0)$ &  (0,0) \\
$\countsem(\neg r,i)$ & $(-,0)$ & $(0,-)$ & $(0,-)$ & $(-,0)$ & $(0,-)$ & $(0,-)$ & $(-,0)$ & (0,0) \\
$\countsem(\eventually g,i)$ & $(2,-)$ & $(1,-)$ & $(0,-)$ & $(2,-)$ & $(1,-)$ & $(0,-)$ & $(1,\infty)$ &  $(0,\infty)$ \\
$\countsem(r \rightarrow \eventually g,i)$ & $(2,-)$ & $(0,-)$ & $(0,-)$ & $(2,-)$ & $(0,-)$ & $(0,-)$ & $(1,\infty)$ & $(0,\infty)$ \\
$\countsem(\always(r \rightarrow \eventually g),i)$ & $(\infty,\infty)$ & $(\infty,\infty)$ & $(\infty,\infty)$  & $(\infty,\infty)$ & $(\infty,\infty)$ & $(\infty,\infty)$ & $(\infty,\infty)$ & $(\infty,\infty)$ \\
\bottomrule
\end{tabular}
}
\label{tab:response-sem}
\end{table}
\end{example}

Not all pairs $(s,f) \in \mathbb{N}_{+} \times \mathbb{N}_{+}$ are possible according to the counting semantics. We 
present the possible pairs in Lemma~\ref{lemma:admissible}.

\begin{lemma}
\label{lemma:admissible}
Let $\pi \in \Pi^{*}$ be a finite trace, $\phi$ an LTL formula and $i\in \mathbb{N}_0$ an index. We have that  
$\countsem(\phi,i)$ is of the form $(a,-)$, $(-,a)$, $(b_{1}, b_{2})$, $(b_{1},\infty)$, $(\infty, b_{2})$ or 
$(\infty, \infty)$, where $a \leq |\pi| - i$ and $b_{j} > |\pi| - i$ for $j \in \{1,2\}$.





\end{lemma}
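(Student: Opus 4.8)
The plan is to prove the statement by structural induction on the LTL formula $\phi$, showing for each case that the pair $\countsem(\phi,i)$ has one of the six admissible forms, and that any \emph{finite} first component $a$ satisfies $a \le |\pi| - i$ while any \emph{finite} component appearing in a $b_j$-slot satisfies $b_j > |\pi| - i$. It is convenient to restate the claim slightly more uniformly: a finite value in the $s$-slot (resp.\ $f$-slot) is always $\le |\pi| - i$, and every component that is finite but $> |\pi| - i$ is exactly what the forms $(b_1,b_2)$, $(b_1,\infty)$, $(\infty,b_2)$ allow; the point is that a component cannot be a finite number strictly between "$\le |\pi|-i$" on one axis while the formula still references only positions inside the trace. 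Note $|\pi| - i$ may be negative when $i > |\pi|$, in which case the bound $a \le |\pi| - i$ forces the $a$-forms $(a,-)$, $(-,a)$ to be impossible and only the $b$- and $\infty$-forms survive; I would check that the base case is consistent with this.

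First I would handle the base case $\phi = p$. If $i \le |\pi|$ then $\countsem(p,i)$ is $(0,-)$ or $(-,0)$, and $0 \le |\pi| - i$ holds; if $i > |\pi|$ then $\countsem(p,i) = (0,0)$, and here both components are finite but we need $0 > |\pi| - i$, which holds since $i > |\pi|$ means $|\pi| - i \le -1 < 0$ --- so $(0,0)$ fits the form $(b_1,b_2)$ with $b_1 = b_2 = 0$. For negation, $\swap$ merely exchanges the two slots, and the list of admissible forms is closed under swapping slots, so the inductive step is immediate. For $\nextt\phi$, the operation is $\inc$: by induction $\countsem(\phi,i+1)$ is admissible with bound $|\pi| - (i+1)$; applying $\oplus 1$ turns a finite $a$ into $a+1$ (and $\infty,-$ are absorbing), and $a \le |\pi| - i - 1$ gives $a + 1 \le |\pi| - i$, so the bounds shift correctly; I would also note $\inc$ preserves the shape of each form.

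The disjunction case uses $\minmax$: the $s$-slot becomes $\min(s,s')$ and the $f$-slot becomes $\max(f,f')$. For the $s$-slot, $\min$ of two values each of which is either $\le |\pi|-i$ or in $\{\infty,-\}$ is again of that kind (and $\min$ can only decrease a finite value, so the $\le |\pi|-i$ bound is preserved); symmetrically $\max$ on the $f$-slot, where I must check that $\max$ of two finite $f$-values, one possibly $> |\pi|-i$, stays consistent --- it does, because the larger one governs and $\infty,-$ behave as the order dictates. The real work, and the step I expect to be the main obstacle, is the $\until$ case (and the analogous but simpler $\eventually$ case), because of its recursive appeal to $\countsem(\nextt(\phi\until\psi),i)$, which by the $\nextt$ clause unfolds to $\countsem(\phi\until\psi, i+1)\inc$. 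I would argue this by an inner induction on $|\pi| - i$ (i.e.\ downward-nested induction on the position, with the outer structural induction giving admissibility of $\countsem(\psi,i)$ and $\countsem(\phi,i)$): in the case $i \le |\pi|$, the recursive term has bound $|\pi| - (i+1)$, incremented to $|\pi| - i$, so $\maxmin$ with $\countsem(\phi,i)$ and then $\minmax$ with $\countsem(\psi,i)$ stays within the admissible forms with bound $|\pi|-i$; in the case $i > |\pi|$, the recursive term is replaced by the literal $(-,\infty)$, whose $s$-slot is $-$ and whose $f$-slot is $\infty$, both vacuously compatible with any bound, and then $\maxmin$ with $\countsem(\phi,i)$ --- which when $i>|\pi|$ is a $b$/$\infty$-form by the base reasoning --- followed by $\minmax$ with $\countsem(\psi,i)$ yields one of the admissible forms. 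The bookkeeping obstacle is purely that one must enumerate, for $\minmax$ and $\maxmin$, how the six forms combine and confirm closure; I would present this as a short case table rather than grinding every pair, remarking that $-$ is the top element and $\infty$ the second, so $\min$/$\max$ interact with them predictably, and that a finite $b_j > |\pi|-i$ in one slot never forces the other slot below its own threshold.
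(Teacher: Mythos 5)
Your proposal is correct and follows essentially the same route as the paper: structural induction with a case analysis over the admissible forms for $\swap$, $\inc$, $\minmax$, $\maxmin$, plus a nested induction on the position for $\until$ and $\eventually$ anchored at $i > |\pi|$. The only organizational difference is that the paper factors the $i > |\pi|$ case out as a separate proposition proved by its own structural induction, whereas you obtain it as a degenerate instance of the main claim (the bound $a \le |\pi| - i$ becoming unsatisfiable), which is a harmless streamlining.
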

\begin{proof} The proof can be obtained using structural induction on the LTL formula (see Appendix~\ref{proof:lemma:admissible}).
\end{proof}

Finally, we relate our counting semantics to the three valued semantics in Lemma~\ref{prop:comp}.

\begin{lemma}
\label{prop:comp}
Given an LTL formula and a trace $\pi \in \Pi^{*}$ where $i \in \setNpos$ is an index and $\phi$ is an LTL formula, we have that
$$
\begin{array}{lll}
\threeval(\phi, i) = \top & \leftrightarrow & \countsem(\phi,i) = (a,\impossible)\text{,} \\
&& \text{and }  \not \exists x < a \scope \pi' = \pi_i \cdot \pi_{i+1} \cdot \dots \pi_{i+x}, \mu_{\pi'}( \phi,1) = \top \\
\threeval(\phi,i) = \bot & \leftrightarrow & \countsem(\phi,i) = (\impossible,a)\text{,} \\
&& \text{and }  \not \exists x < a \scope \pi' = \pi_i \cdot \pi_{i+1} \cdot \dots \pi_{i+x}, \mu_{\pi'}( \phi,1) = \bot \\
\threeval(\phi,i) = \inconclusive & \leftrightarrow & \countsem(\phi,i)  = (b_{1}, b_{2})\text{,} \\
\end{array} 
$$
\noindent where $a \leq |\pi| - i$ and $b_{j}$ is either $\infty$ or $b_{j} > |\pi| - i$ for $j \in \{1,2\}$.
\end{lemma}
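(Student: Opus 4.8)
The plan is to prove the three biconditionals simultaneously by structural induction on $\phi$, carrying along the side conditions on the witness counts as part of the inductive invariant. The natural way to organize this is to first establish two auxiliary facts that make the induction go through cleanly. First I would prove a ``shape'' refinement of Lemma~\ref{lemma:admissible}: whenever $\countsem(\phi,i) = (a,\impossible)$ with $a \le |\pi|-i$, the value $a$ is exactly the minimal number of extra positions needed so that $\mu$ already commits to $\top$ on the truncated trace — and symmetrically for $(\impossible,a)$. Second I would record the monotonicity fact that extending the trace can only move $\threeval$ upward or downward monotonically along the lattice order from the Remark, so that ``there exists a short witness'' is well behaved. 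With these in hand, the forward and backward directions of each biconditional reduce to comparing the case split in Definition~\ref{def:counting_finitary} with the case split in the definition of $\threeval$.

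The core of the argument is then a case analysis on the outermost connective, in each case invoking the induction hypothesis on the immediate subformulas and checking that the pair-operations ($\swap$, $\inc$, $\minmax$, $\maxmin$) transport the ``first component finite / second component $\impossible$'' shape in exactly the way the $3$-valued operators transport $\top$, and likewise for $\bot$ and $\inconclusive$. For the base case $\phi = p$: $\mu_\pi(p,i)=\top$ iff $i\le|\pi|$ and $p\in\pi_i$ iff $\countsem(p,i)=(0,\impossible)$, and there is no $x<0$, so the side condition is vacuous; the $\bot$ and $\inconclusive$ cases are symmetric, the latter corresponding to $i>|\pi|$ and $\countsem(p,i)=(0,0)$. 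For $\neg$, $\swap$ exchanges the two shapes and $\neg_3$ exchanges $\top$ and $\bot$ while fixing $\inconclusive$, so the three lines permute correctly and the minimality side condition is preserved because a shortest $\top$-witness for $\neg\phi$ is a shortest $\bot$-witness for $\phi$. For $\vee$: $\minmax$ yields second component $\impossible$ iff both inputs have second component $\impossible$, matching $\top\vee_3\top=\top$; it yields first component $\impossible$ iff both inputs have first component $\impossible$, matching $\bot\vee_3\bot=\bot$; otherwise it produces a $(b_1,b_2)$ pair, matching $\inconclusive$ whenever at least one argument is $\inconclusive$ and the other is not $\top$. Crucially one must check the minimal-witness condition: the shortest $\top$-witness of $\phi_1\vee\phi_2$ is the minimum of the two shortest $\top$-witnesses, which is exactly what $\min$ in $\minmax$ computes, and for $\bot$ it is the maximum, matching $\max$. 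For $\nextt$, incrementing both components by one and shifting the index by one are in lock-step with $\mu(\nextt\phi,i)=\mu(\phi,i+1)$; here one uses that $\pi' = \pi_i\cdots\pi_{i+x}$ has $\mu_{\pi'}(\nextt\phi,1) = \mu_{\pi'}(\phi,2)$, so the off-by-one in the witness length is exactly the $\inc$. For $\eventually$ and $\until$, the two-case definitions (inside vs.\ past the end of the trace) mirror the two-case definitions of $\threeval$; the replacement of the recursive tail by $(\impossible,\infty)$ past the end corresponds to $\threeval$ dropping the $\nextt$-disjunct, and the appearance of $\infty$ as a second component is precisely what blocks the $(\cdot,a)$ shape with finite $a$, i.e.\ blocks a $\bot$ verdict — consistent with the fact that a liveness obligation is never finitely refuted. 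For these two operators one should argue the side condition by an inner induction on $|\pi|-i$ (the finite-trace unfolding terminates), showing the first component counts exactly the distance to the nearest position witnessing $\psi$ (for $\until$) under the constraint that $\phi$ holds throughout.

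The main obstacle I expect is not any single connective but the bookkeeping around the minimal-witness side conditions under $\minmax$/$\maxmin$ for $\until$, because there the ``witness'' is not a single subformula event but a whole prefix pattern ($\psi$ at some $j$, $\phi$ at all $k<j$), and one must check that the recursively defined $\min$ genuinely returns the length of the shortest such pattern rather than merely some upper bound — in particular that when $\countsem(\phi,i)$ contributes via the $\maxmin$ with $\countsem(\nextt(\phi\until\psi),i)$, the $\max$ correctly reflects that the overall satisfaction must wait for both the current $\phi$ to be confirmed and the tail until to be confirmed. A related subtlety is the direction ``$\countsem(\phi,i)=(b_1,b_2)\Rightarrow\threeval(\phi,i)=\inconclusive$'': one must rule out that $b_1$ or $b_2$ is a finite number $\le|\pi|-i$, which is exactly where Lemma~\ref{lemma:admissible} is invoked to guarantee $b_j>|\pi|-i$ or $b_j=\infty$, so that neither a $\top$- nor a $\bot$-commitment has yet occurred within the observed trace, hence by the definition of $[\pi\models_3\phi]$ there are extensions both ways. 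I would present the common base case and the $\neg,\vee,\nextt$ cases in full and then treat $\eventually$ as the representative ``end-of-trace'' case, relegating $\until$ to the appendix since it follows the same template with heavier notation.
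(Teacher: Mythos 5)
The paper does not actually prove Lemma~\ref{prop:comp}: it offers only the one-sentence intuition that ``$\impossible$'' is introduced within the trace exactly when a satisfaction (violation) is observed, and that pair values propagate only into the past. Your structural induction, carrying the minimal-witness side condition as part of the inductive invariant and using Lemma~\ref{lemma:admissible} to pin down the admissible pair shapes, is the natural formalization of that intuition rather than a different route, and your identification of the delicate points is accurate: the monotonicity of $\threeval$ under trace extension does need to be stated and proved (it is what makes ``shortest witness'' well defined), the $\until$ case does require an inner induction on $|\pi|-i$ to see that the recursion computes the exact distance to the earliest commitment rather than merely an upper bound, and the $\inconclusive$ direction is exactly where Lemma~\ref{lemma:admissible} must be invoked. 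In short, your plan is considerably more rigorous than what the paper provides.

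One local error should be fixed before executing the plan: in the disjunction case you claim that $\minmax$ ``yields second component $\impossible$ iff both inputs have second component $\impossible$, matching $\top\vee_3\top=\top$.'' Both halves of that sentence are wrong. The second component of $(s,f)\minmax(s',f')$ is $\max(f,f')$, and since $\impossible$ is the top element of $\mathbb{N}_{+}$ this equals $\impossible$ iff \emph{at least one} of $f,f'$ is $\impossible$; correspondingly, $\vee_3$ is the maximum over $\bot<\inconclusive<\top$, so $\threeval(\phi_1\vee\phi_2,i)=\top$ iff \emph{at least one} disjunct evaluates to $\top$ (e.g.\ $\top\vee_3\inconclusive=\top$). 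The two slips happen to cancel, so the biconditional you need still holds, but the case analysis as written would not go through: you would be matching a conjunctive condition on the pairs against a disjunctive condition on the truth values. Your treatment of the first component ($\min=\impossible$ iff both inputs are $\impossible$, matching $\bot\vee_3\bot=\bot$) and of the minimal-witness bookkeeping ($\min$ for satisfaction, $\max$ for violation) is correct as stated.
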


Intuitively, Lemma~\ref{prop:comp} holds because we only introduce the symbol ``$\impossible$'' within the trace when a satisfaction (violation) is observed. And the values of a pair only propagate into the past (and never into the future).

\subsection{Evaluation}
\label{sec:evaluation}

We now propose a mapping that predicts a qualitative verdict from our counting semantics. We adopt a 
$5$-valued set consisting of $\textsf{true}$ ($\top$), $\textsf{presumably true}$ ($\ptrue$),  $\textsf{inconclusive}$ ($\inconclusive$),
$\textsf{presumably false}$ ($\pfalse$) and $\textsf{false}$ ($\bot$) verdicts. We define the following 
order over these five values: $\bot < \pfalse < \inconclusive < \ptrue < \top$. We equip this $5$-valued domain with 
the negation ($\neg$) and disjunction ($\vee$) operations, letting $\neg \top = \bot$, $\neg \ptrue = \pfalse$, 
$\neg \inconclusive = \inconclusive$, $\neg \pfalse = \ptrue$, $\neg \bot = \top$ and 
$\phi_{1} \vee \phi_{2} = \max \{ \phi_{1}, \phi_{2} \}$. We define other Boolean operators such as conjunction 
 by the usual logical equivalences ($\phi_{1} \wedge \phi_{2} = \neg(\neg \phi_{1} \vee \neg \phi_{2})$, etc.).

%
%

We evaluate a property on a trace to $\top$ ($\bot$) when the satisfaction (violation) can be fully determined from the trace, following the definition of the three-valued semantics
$\mu$. Intuitively, this takes care of the case in which the safety
(co-safety) part of a formula has been violated (satisfied), at least
for properties that are intentionally safe (intentionally co-safe,
resp.)  \cite{DBLP:journals/fmsd/KupfermanV01}.

Whenever the truth value is not determined, we distinguish whether $d_\pi(\phi,i)$ indicates the possibility for a  satisfaction, respective violation, in finite time or not.
For possible satisfactions, respective violations, in finite time we make a prediction on whether past observations support the believe that the trace is going to satisfy or violate the property.
If the predictions are not inconclusive and not contradicting, then we evaluate the trace to the (presumable) truth value $\ptrue$ or$\pfalse$.
If we cannot make a prediction to a truth value, we compute the truth value recursively based on the operator in the formula and the truth values of the subformulas (with temporal operators unrolled). 
 
We use the predicate $\prediction_\pi$  to give the prediction based on the observed witnesses for satisfaction. 
The predicate $\prediction_\pi(\phi,i)$ becomes $\inconclusive$ when no witness for satisfaction exists in the past.
When there exists a witness that requires at least the same amount of additional steps as the trace under evaluation then the predicate evaluates to $\top$. If all the existing witnesses (and at least one exists)  are shorter than the current trace, then the predicate evaluates to $\bot$.
For a prediction on the violation we make a prediction on the satisfaction of $d_\pi(\neg \phi,i)$, i.e., we compute $\prediction_\pi(\neg \phi,i)$.

\begin{definition}[Prediction predicate]
\label{sec:finiteLTL:sec:countSem:def:predPredicate}

Let $s,f$ denote natural numbers and let $s_\pi(\phi,i), f_\pi(\phi,i) \in \mathbb{N}_+$ such that $\countsem(\phi,i)=\big(s_\pi(\phi,i),f_\pi(\phi,i) \big)$.
We define the $3$-valued predicate $\prediction_\pi$  as

\begin{align*}
\prediction_\pi(\phi,i) &= \begin{cases} 
    \top & \text{if } \exists j < i \scope \countsem(\phi,j) = (s',-) \text { and } s_\pi(\phi,i)  \leq s' \text{,}\\
    ? & \text{if } \not \exists j < i \scope \countsem(\phi,j) =
    (s',-) \text{,}\\
    \bot & \text{if } \exists j < i \scope \countsem(\phi,j) = (s',-) \text { and } \text{,}\\
     &  \mbox{ }  s_\pi(\phi,i) > \max_{0 \leq j < i} \{ s'~|~\countsem(\phi, j) = (s',-) \}  \text{,}\\
    \end{cases}  \\    
\end{align*}
\end{definition}

For the evaluation we consider a case split among the possible combinations of values in the pairs.

\begin{definition}[Predictive evaluation]
\label{def:predictive}
We define the {\em predictive evaluation} function $\predictive(\phi,i)$, with $a \leq |\pi| - i$ and $b_{j} > |\pi| - i$ for $j \in \{1,2\}$ and $a,b_j \in \mathbb{N}_0$, for the different cases of $ d_\pi(\phi,i)$:
\begin{center}
\begin{tabular}{llc}
$d_\pi(\phi,i) $ &  & $\predictive(\phi,i)$\\
\toprule
$(a,-)$&   &$\top $\\
\midrule   
            &$\text{if } \prediction_\pi(\phi,i) >  \prediction_\pi(\neg \phi,i) $ &   $ \ptrue$   \\
$  (b_1,b_2) $& $ \text{if }   \prediction_\pi(\phi,i) = \prediction_\pi(\neg \phi,i) $  & $\fsem(\phi,i)$ \\
            & $ \text{if }  \prediction_\pi(\phi,i) < \prediction_\pi(\neg \phi,i) $  &$ \pfalse$  \\ 
\midrule   
            &  $\text{if } \prediction_\pi(\phi,i) = \top $  & $ \ptrue$   \\
$  (b_1,\infty) $& $\text{if }    \prediction_\pi(\phi,i) = \inconclusive$  & $\fsem(\phi,i)$  \\
            & $ \text{if }   \prediction_\pi(\phi,i) = \bot$  &$ \pfalse$ \\ 
\midrule
$ (\infty,b_1) $&   & $\predictive(\neg \phi,i) $\\     
\midrule
$  (\infty,\infty)$ &   & $\fsem(\phi,i)  $\\
\midrule
$ (-,a)$&   &$ \bot$ \\
\bottomrule
\end{tabular}
\end{center}
\noindent where $\fsem(\phi,i)$ is an auxiliary function defined inductively as follows:

\begin{align*}
\fsem(p,  i)				& =	 \inconclusive \\
\fsem(\neg \phi,  i)			& =	 \neg \predictive(\phi, i) \\
\fsem(\phi_{1} \vee \phi_{2},  i)	& =	 \predictive(\phi_{1}, i) \vee \predictive(\phi_{2},  i) \\
\fsem(\nextt^n \phi, i)			& =	 \predictive(\phi,i+n) \\
\fsem(\eventually \phi, i)	& =
  \begin{cases}
   \predictive(\phi,  i) \vee\fsem(\nextt   \eventually \phi, i)  & \text{if } i \leq |\pi| \\
    \predictive(\phi,  i)  & \text{if } i > |\pi| \\
    \end{cases}  \\
\fsem(\phi_{1} \until \phi_{2}, i)	& =	
  \begin{cases}
   \predictive(\phi_{2},  i) \vee (\predictive(\phi_{2}, i) \wedge \predictive(\nextt (\phi_{1} \until \phi_{2}),  i)  & \text{if } i \leq |\pi| \\
   \predictive(\phi_{2},  i)  & \text{if } i > |\pi| \\
    \end{cases}  
\end{align*}
\end{definition}

The predictive evaluation function is symmetric. Hence, $\predictive(\phi,i) = \neg \predictive(\neg \phi,i)$ holds.

\begin{example}
The outcome of evaluating $\tau_2$ from Table~\ref{tab:response} is
shown in Table~\ref{tab:response-eval}. Subformula $r \rightarrow
\eventually g$ is predicted to be $\ptrue$ at $i=7$ because there
exists a longer witness for satisfaction in the past (e.g., at $i=1$).
Thus, the trace evaluates to $\ptrue$, as expected.
\begin{table}[tb]
\centering
\caption{Unbounded response property example with $\pi = \tau_{2}$.}
   {\scriptsize
\begin{tabular}{r|cccccccc}
\toprule
	 &  1	& 2	& 3	& 4	& 5	& 6	& 7	  & EOT \\
\midrule
$r$    & $\top$ & $-$ & $-$ & $\top$ & $-$ & $-$ & $\top$ & \\
$g$     &  $-$ & $-$ & $\top$ & $-$ & $-$ & $\top$ & $-$ &  \\
\bottomrule
$\countsem(r,i)$ & $(0,-)$ & $(-,0)$ & $(-,0)$ & $(0,-)$ & $(-,0)$ & $(-,0)$ & $(0,-)$ &  (0,0) \\
$\predictive(r,i) $ & $\top$  & $\bot$  & $\bot$  & $\top$  & $\bot$  & $\bot$  & $\top$ & ?\\
\midrule
$\countsem(g,i)$ & $(-,0)$ & $(-,0)$ & $(0,-)$ & $(-,0)$ & $(-,0)$ & $(0,-)$ & $(-,0)$ &  (0,0) \\
$\predictive(g,i) $ & $\bot$  & $\bot$  & $\top$  & $\bot$  & $\bot$  & $\top$  & $\bot$ & ?\\
\midrule
$\countsem(\eventually g,i)$ & $(2,-)$ & $(1,-)$ & $(0,-)$ & $(2,-)$ & $(1,-)$ & $(0,-)$ & $(1,\infty)$ &  $(0,\infty)$ \\
$\predictive(\eventually g,i) $ & $\top$ & $\top$ & $\top$ & $\top$ & $\top$ & $\top$ & $\ptrue$ &   $\ptrue$\\
\midrule
$\countsem(r \rightarrow \eventually g,i)$ & $(2,-)$ & $(0,-)$ & $(0,-)$ & $(2,-)$ & $(0,-)$ & $(0,-)$ & $(1,\infty)$ & $(0,\infty)$ \\
$\predictive(r \rightarrow \eventually g,i) $  & $\top$  & $\top$  & $\top$  & $\top$  & $\top$  & $\top$  & $\ptrue$ &   $\ptrue$\\
\midrule
$\countsem(\always(r \rightarrow \eventually g),i)$ & $(\infty,\infty)$ & $(\infty,\infty)$ & $(\infty,\infty)$  & $(\infty,\infty)$ & $(\infty,\infty)$ & $(\infty,\infty)$ & $(\infty,\infty)$ & $(\infty,\infty)$ \\
$\predictive(\always(r \rightarrow \eventually g),i) $ &   $\ptrue$ &   $\ptrue$ &   $\ptrue$ &   $\ptrue$ &   $\ptrue$ &   $\ptrue$ &   $\ptrue$ &   $\ptrue$\\
\bottomrule
\end{tabular}
}
\label{tab:response-eval}
\end{table}
\end{example}



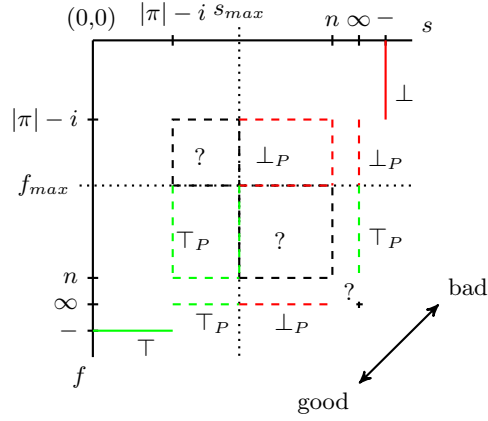
\begin{figure}
  \centering
  \begin{tikzpicture}[thick, scale=0.7]

  \draw [-] (0,0) -- (0,-6);
  \node [below] at (-0.25,-6) {$f$};
  \draw [-] (-0.1,-5.5) -- (0.1,-5.5);
  \node [left] at (-0.1,-5.5) {$-$};
  \draw [-] (-0.1,-5) -- (0.1,-5);
  \node [left] at (-0.1,-5) {$\infty$};
  \draw [-] (-0.1,-4.5) -- (0.1,-4.5);
  \node [left] at (-0.1,-4.5) {$n$};
  \draw [-] (-0.1,-1.5) -- (0.1,-1.5);
  \node [left] at (-0.1,-1.5) {$|\pi| - i$};
  
  \draw [-] (0,0) -- (6,0);
  \node [right] at (6,0.25) {$s$};
  \draw [-] (5.5,-0.1) -- (5.5,0.1);
  \node [above] at (5.5,0.1) {$-$};
  \draw [-] (5,-0.1) -- (5,0.1);
  \node [above] at (5,0.1) {$\infty$};
  \draw [-] (4.5,-0.1) -- (4.5,0.1);
  \node [above] at (4.5,0.1) {$n$};
  \draw [-] (1.5,-0.1) -- (1.5,0.1);
  \node [above] at (1.5,0.1) {$|\pi| - i$};  
  
  \node [above] at (-.025,0) {(0,0)};

  \node [above] at (2.75,0.25) {$s_{max}$};
  \draw [dotted,-] (2.75,0.25) -- (2.75,-6);
  \node [left] at (-0.25,-2.75) {$f_{max}$};
  \draw [dotted,-] (-0.25,-2.75) -- (6,-2.75);

  
  \node [left] at (2.3,-2.2) {$?$};
  \draw[dashed] (1.5,-1.5) rectangle (2.75,-2.75);
  \node [left] at (3.8,-3.8) {$?$}; 
  \draw[dashed] (2.75,-2.75) rectangle (4.5,-4.5); 

  \node [left] at (3.9,-2.2) {$\bot_P$};
  \draw[dashed, color=red,-] (2.75,-1.5) -- (4.5,-1.5);
  \draw[dashed, color=red,-] (4.5,-1.5) -- (4.5,-2.75);
  \draw[dashed, color=red,-] (2.75,-2.75) -- (4.5,-2.75);
  \node [left] at (2.4,-3.8) {$\top_P$};
  \draw[dashed, color=green,-] (1.5,-2.75) -- (1.5,-4.5);
  \draw[dashed, color=green,-] (2.75,-2.75) -- (2.75,-4.5);
  \draw[dashed, color=green,-] (1.5,-4.5) -- (2.75,-4.5);
  
  \draw [thick,-,color=green] (0,-5.5) -- (1.5,-5.5);
  \node [below] at (1,-5.5) {$\top$};
  \draw [dashed,-,color=green] (1.5,-5) -- (2.75,-5);
  \node [below] at (2.25,-5) {$\top_P$};
  \draw [dashed,-,color=red] (2.75,-5) -- (4.5,-5);
  \node [below] at (3.75,-5) {$\bot_P$};
  \draw [thick,-,color=red] (5.5,0) -- (5.5,-1.5);
  \node [right] at (5.5,-1) {$\bot$};
  \draw [dashed,-,color=red] (5,-1.5) -- (5,-2.75);
  \node [right] at (5,-2.25) {$\bot_P$};
  \draw [dashed,-,color=green] (5,-2.75) -- (5,-4.5);
  \node [right] at (5,-3.75) {$\top_P$};
  
  \draw [-] (5,-4.95) -- (5,-5.07);
  \draw [-] (4.95,-5) -- (5.07,-5);
  \node [above left] at (5.1,-5.1) {$?$};
  
  \draw [<->] (5,-6.5) -- (6.5,-5);  
  \node [below left] at (5,-6.5) {\text{good}};
  \node [above right] at (6.5,-5) {\text{bad}};
\end{tikzpicture}
  \caption{Lattice for $(s,f)$ with $\phi$ and $i < |\pi|$ fixed.}
  \label{sec:finiteLTL:sec:lattice}
\end{figure}


In Figure~\ref{sec:finiteLTL:sec:lattice} we visualize the evaluation of a pair $\countsem(\phi,i) = (s,f)$ for a fixed $\phi$ and a fixed position $i$. 
On the x-axis is the witness count $s$ for a satisfaction and on the y-axis is the witness count $f$ for a violation.
For a value $s$, respectively $f$, that is smaller than the length of the suffix starting at position $i$ (with the other value of the pair always being $\impossible$), the evaluation is either $\top$ or $\bot$.
Otherwise the evaluation depends on the values $s_{max}$ and $f_{max}$. These two values represent the largest witness counts for a satisfaction and a violation in the past, i.e., for positions smaller than $i$  in the trace.
Based on the prediction function $\prediction_\pi(\phi,i)$ the evaluation becomes  $\top_P$, $\inconclusive$ or $\bot_P$, where $\inconclusive$ indicates that the auxiliary function $\fsem(\phi,i)$ has to be applied. 
Starting at an arbitrary point in the diagram and moving to the right increases the witness count for a satisfaction while the witness count for a violation remains constant. Thus, moving to the right makes the pair ``more false''. The same holds when keeping the witness count for a satisfaction constant and moving up in the diagram as this decrease the witness count for a violation. Analogously, moving down and/or left makes the pair ``more true'' as the witness count for a violation gets larger and/or the witness count for a satisfaction gets smaller.

Our $5$-valued predictive evaluation refines the $3$-valued LTL semantics.

\begin{theorem}
\label{theorem:refinement}
Let $\phi$ be an LTL formula, $\pi \in \Pi^{*}$ and $i \in \setNpos$. We have 
$$
\begin{array}{lllll}
\threeval(\phi,i) = & \top & \leftrightarrow & \predictive(\phi, i) = & \top \text{,}\\ 
\threeval(\phi,i) = & \bot & \leftrightarrow & \predictive(\phi, i) = & \bot \text{,}\\
\threeval(\phi,i) = & \inconclusive & \leftrightarrow & \predictive(\phi, i) \in & \{ \ptrue, \pfalse, \inconclusive \} \text{.}\\
\end{array}
$$
\end{theorem}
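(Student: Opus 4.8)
The plan is to prove the theorem by structural induction on $\phi$, piggybacking on the characterization already established in Lemma~\ref{prop:comp}, which relates $\threeval$ to the shape of the pair $\countsem(\phi,i)$. The first observation is that the theorem's three lines are logically equivalent to a single biconditional together with its negation-closed companions: since both $\threeval$ and $\predictive$ take values in totally ordered five- (resp.\ three-) element sets, and since $\predictive$ is symmetric ($\predictive(\phi,i)=\neg\predictive(\neg\phi,i)$, stated just before the example) exactly as $\threeval$ is ($\threeval(\neg\phi,i)=\neg_3\threeval(\phi,i)$), it suffices to prove the $\top$-line in both directions; the $\bot$-line then follows by applying the $\top$-line to $\neg\phi$, and the $\inconclusive$-line follows because all three verdict classes are exhausted on both sides.

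For the $\top$-line, I would argue purely from Definition~\ref{def:predictive} and Lemma~\ref{prop:comp}. By Lemma~\ref{lemma:admissible} the pair $\countsem(\phi,i)$ has one of the six admissible shapes. Inspecting the case table defining $\predictive$: the value $\top$ is produced \emph{only} in the row $d_\pi(\phi,i)=(a,\impossible)$, and never in the rows $(b_1,b_2)$, $(b_1,\infty)$, $(\infty,b_2)$, $(\infty,\infty)$, $(\impossible,a)$ — note that each of the ``presumable'' and $\fsem$-branches yields a value in $\{\ptrue,\pfalse,\inconclusive\}$ (for the $\fsem$ branches this needs a sub-induction, see below), the $(\impossible,a)$ row yields $\bot$, and the $(\infty,b_2)$ row recurses to $\predictive(\neg\phi,i)$ which by the inductive symmetry argument cannot be $\top$ when the pair for $\phi$ has that shape. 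Hence $\predictive(\phi,i)=\top$ iff $\countsem(\phi,i)=(a,\impossible)$ for some $a\le|\pi|-i$, and by the first biconditional of Lemma~\ref{prop:comp} this holds iff $\threeval(\phi,i)=\top$. Symmetrically for $\bot$, and the remaining verdicts land in $\inconclusive$ on the $\mu$ side iff the pair is $(b_1,b_2)$ (again Lemma~\ref{prop:comp}) iff $\predictive$ returns one of $\ptrue,\pfalse,\inconclusive$.

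The one genuine piece of work is the nested induction needed to justify that the auxiliary function $\fsem(\phi,i)$ — invoked in the $(b_1,b_2)$, $(b_1,\infty)$, and $(\infty,\infty)$ rows — never returns $\top$ or $\bot$. I would prove the strengthened statement ``if $\countsem(\phi,i)$ is not of the form $(a,\impossible)$ or $(\impossible,a)$ then $\fsem(\phi,i)\notin\{\top,\bot\}$'' by induction on $\phi$ simultaneously with the main claim, using that $\fsem$ is built from $\neg$, $\vee$ (which on the five-valued domain is $\max$, so it outputs $\top$ only if an argument is $\top$), and the unrollings of $\eventually$ and $\until$; in each case one must check, via Lemma~\ref{lemma:admissible} and the semantic clauses of Definition~\ref{def:counting_finitary}, that whenever $\countsem(\phi,i)$ is ``inconclusive-shaped'' the relevant subformulas/positions entering $\fsem$ also have pairs that force their $\predictive$-values away from $\top,\bot$, or are themselves handled by the inductive hypothesis. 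This bookkeeping — matching the $i\le|\pi|$ versus $i>|\pi|$ branches of $\fsem$ against those of $\countsem$ and $\threeval$, and handling the $\until$ clause whose two $\predictive(\phi_2,\cdot)$ occurrences look like a typo for $\predictive(\phi_1,\cdot)$ in the second slot — is the main obstacle; everything else is a direct read-off from Lemma~\ref{prop:comp}.
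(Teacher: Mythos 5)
Your proof follows essentially the same route as the paper, which justifies the theorem in one sentence by observing that $\predictive$ yields $\top$ (resp.\ $\bot$) exactly on pairs of shape $(a,\impossible)$ (resp.\ $(\impossible,a)$) and then appealing to Lemma~\ref{prop:comp}. The sub-induction you add, showing that the auxiliary function $\fsem$ never returns $\top$ or $\bot$ on inconclusive-shaped pairs, is precisely the detail the paper's one-line justification leaves implicit, and it is needed for the argument to be complete.
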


Theorem~\ref{theorem:refinement} holds, because the evaluation to $\top$ and $\bot$ is simply the mapping of a pair that contains the symbol ``$\impossible$'', which we have shown in Lemma~\ref{prop:comp}. 

Remember that $\mathbb{N}_+ \times \mathbb{N}_+$ is partially ordered
by $\unlhd$. We now show that having a trace that is ``more true'' than another is correctly reflected in our finitary semantics.
To define ``more true'', we first need the polarity of a proposition in an LTL formula.

\begin{example}
Note that $g$ has positive polarity in $\phi = \always(r \rightarrow
\eventually g)$. If we define $\tau'_2$ to be as $\tau_2$, except that
$g \in \tau'_2(i)$ for $i \in \{1,\dots,6\}$, we have
$e_{\tau'_2}(\phi,i) = \pfalse$, whereas $e_{\tau_2}(\phi,i) = \ptrue$.
\end{example}

\begin{definition}[Polarity]
Let $\# \neg$ be the number of negation operators on a specific path in the parse tree of $\phi$ starting at the root.
We define the polarity as the function $\text{pol}(p)$ with proposition $p$ in an LTL formula $\phi$ as follows:
\begin{align*}
\text{pol}(p) = \begin{cases}
    \text{pos,} & \text{if $\# \neg$ on all paths to a leaf with proposition $p$ is even,}\\
    \text{neg,} & \text{if $\# \neg$ on all paths to a leaf with proposition $p$ is odd,}\\
    \text{mixed, }& \text{otherwise.}\\
  \end{cases}
\end{align*}
\end{definition}

With the polarity defined, we now define the constraints for a trace to be ``more true'' with respect to an LTL formula $\phi$.

\begin{definition}[$\pi \sqsubseteq_\phi \pi'$]
Given two traces $\pi$ and $\pi'$ of equal length and an LTL formula $\phi$ over proposition $p$, we define that $\pi \sqsubseteq_\phi \pi'$ iff
$$
\begin{array}{lllll}
\forall i \forall p \scope &  pol(p) = \text{mixed} \Rightarrow p \in \pi_i \leftrightarrow p \in \pi'_i \text{ and}\\
   &pol(p) = \text{pos} \Rightarrow p \in \pi_i \rightarrow p \in \pi'_i \text{ and}\\
   &pol(p) = \text{neg} \Rightarrow p \in \pi_i \leftarrow p \in \pi'_i. \\
\end{array}
$$
\end{definition}

Whenever one trace is ``more true'' than another, this is correctly reflected in our finitary semantics.

\begin{theorem}
\label{theorem:moreTrue}
For two traces $\pi$ and $\pi'$ of equal length and an LTL formula $\phi$ over proposition $p$, we have that
$$
\pi \sqsubseteq_\phi \pi' \Rightarrow d_{\pi'}(\phi,1) \unlhd d_{\pi}(\phi,1).
$$
Therefore, we have for $\pi \sqsubseteq_\phi \pi'$ that
\begin{align*}
e_{\pi}(\phi, 1) = \top  \Rightarrow e_{\pi'}(\phi, 1) = \top& \text{, and}\\
e_{\pi}(\phi, 1) = \bot  \Leftarrow e_{\pi'}(\phi, 1) = \bot&.
\end{align*}
\end{theorem}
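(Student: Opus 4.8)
The plan is to prove the main inequality $\pi \sqsubseteq_\phi \pi' \Rightarrow d_{\pi'}(\phi,1) \unlhd d_{\pi}(\phi,1)$ by structural induction on $\phi$, but strengthened to hold at every position $i$ rather than only at $i=1$ — the recursive definition of $\countsem$ forces us to reason about all positions simultaneously, and since $\pi$ and $\pi'$ have equal length the relation $\pi \sqsubseteq_\phi \pi'$ is position-independent. Actually the cleanest strengthening also tracks subformulas: for every subformula $\psi$ of $\phi$ and every index $i$, the polarity of each proposition in $\psi$ is inherited from its polarity in $\phi$ (a path in the parse tree of $\psi$ extends to one in $\phi$), so $\pi \sqsubseteq_\phi \pi'$ implies $\pi \sqsubseteq_\psi \pi'$, and we may as well prove $\pi \sqsubseteq_\phi \pi' \Rightarrow \forall i\ d_{\pi'}(\phi,i) \unlhd d_{\pi}(\phi,i)$ and induct on $\phi$ directly. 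Throughout I will use that $\unlhd$ is the lattice order with join $\minmax$ and meet $\maxmin$ (stated in the Remark), and that $\swap$ is order-reversing while $\inc$ is order-preserving.

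For the base case $\phi = p$: if $\mathrm{pol}(p)=\text{pos}$ then $p \in \pi_i \rightarrow p \in \pi'_i$, so the possible transitions are $(-,0)\rightsquigarrow(-,0)$, $(-,0)\rightsquigarrow(0,-)$, $(0,-)\rightsquigarrow(0,-)$ (within the trace) or $(0,0)$ vs.\ $(0,0)$ (past the end); in each case $d_{\pi'}(p,i)\unlhd d_\pi(p,i)$ since $(0,-)$ is the top and $(-,0)$ the bottom of the relevant chain. For $\mathrm{pol}(p)=\text{neg}$ the implication flips and so does the conclusion, consistently. The $\mathrm{pol}(p)=\text{mixed}$ case gives $p\in\pi_i \leftrightarrow p\in\pi'_i$, hence equality. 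For the inductive step: negation uses $d_{\pi'}(\neg\psi,i) = \swap d_{\pi'}(\psi,i)$ together with the fact that negating $\phi$ swaps pos/neg polarities, so the induction hypothesis applied to $\psi$ under the reversed trace relation combined with $\swap$ being order-reversing gives the claim. Disjunction is monotone because $\minmax$ is the lattice join, so it preserves $\unlhd$ in both arguments. The $\nextt$ case is immediate: $\inc$ is monotone and only shifts the index. For $\eventually\psi$ and $\phi_1\until\phi_2$, I would argue by a secondary (downward) induction on $|\pi| - i$, or equivalently observe that within the trace the value is built from finitely many applications of $\minmax$ and $\maxmin$ to values $d_\pi(\psi,j)$, $d_\pi(\phi_1,j)$, $d_\pi(\phi_2,j)$ for $j$ ranging over $i,\dots,|\pi|$ plus the fixed boundary pair $(-,\infty)$; since each of these is monotone by the primary IH and the boundary term is identical for $\pi$ and $\pi'$, and $\minmax,\maxmin$ are monotone, the whole expression is monotone.

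The final two implications about $e_\pi$ follow by combining the inequality just proved with Lemma~\ref{prop:comp} and Theorem~\ref{theorem:refinement}: $e_\pi(\phi,1)=\top$ iff $\threeval(\phi,1)=\top$ iff $d_\pi(\phi,1)$ has the form $(a,\impossible)$; and $(a,\impossible)$ is the top element of the lattice (nothing is $\unrhd$ a pair with second coordinate $\impossible$ except pairs with second coordinate $\impossible$), so $d_{\pi'}(\phi,1)\unlhd d_\pi(\phi,1)$ would be impossible unless — wait, we need $d_{\pi'}\unrhd$, let me restate: we have $d_{\pi'}(\phi,1)\unlhd d_\pi(\phi,1)$ meaning $\pi'$ is \emph{less} true, so actually the direction is that $e_\pi(\phi,1)=\top$ does \emph{not} directly transfer — I must be careful about which trace is ``more true''. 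Re-reading: $\pi \sqsubseteq_\phi \pi'$ means $\pi'$ has more positive propositions set, and the theorem claims $d_{\pi'}(\phi,1)\unlhd d_\pi(\phi,1)$, i.e., $d_{\pi'}$ is larger in the ``$\geq s$, $\leq f$'' sense meaning \emph{closer to true} — so in fact $\unlhd$ here has $d_{\pi'}$ on the ``more true'' side (the Remark says larger values, i.e.\ the $\minmax$-join side, are closer to true, and $(s,f)\unlhd(s',f')$ when $s\geq s'$, so $d_{\pi'}\unlhd d_\pi$ puts $d_{\pi'}$ at the top). Hence $e_\pi(\phi,1)=\top$, i.e.\ $d_\pi(\phi,1)=(a,\impossible)$, forces $d_{\pi'}(\phi,1)$ to be $\unrhd$ it in the true direction, which among admissible pairs (Lemma~\ref{lemma:admissible}) means $d_{\pi'}(\phi,1)$ also has second coordinate $\impossible$, hence $e_{\pi'}(\phi,1)=\top$; the $\bot$ case is dual via $\swap$ / symmetry of $\predictive$.

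The main obstacle I anticipate is the $\until$/$\eventually$ inductive step: the recursion in Definition~\ref{def:counting_finitary} is genuinely recursive in $i$ (the value at $i$ refers to $\countsem(\nextt(\phi\until\psi),i) = \countsem(\phi\until\psi,i+1)\inc$), so a naive structural induction on $\phi$ does not close — one needs the nested induction on $|\pi|-i$ (finite since $i$ eventually exceeds $|\pi|$, where the closed-form boundary case applies and the two traces agree there because both use the literal pair $(-,\infty)$), and then carefully chase that every combining operation ($\minmax$, $\maxmin$, $\inc$) is monotone with respect to $\unlhd$. A secondary subtlety worth a sentence in the writeup is verifying that the polarity bookkeeping is sound under $\neg$ and under taking subformulas, so that the induction hypothesis is applicable to subformulas with the correctly (possibly reversed) oriented trace relation; this is where one invokes that $\# \neg$ on a root-to-leaf path in $\phi$ equals $1$ plus the corresponding count in the immediate subformula under a $\neg$, flipping even/odd.
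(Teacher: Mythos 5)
Your proposal is correct in substance and is in fact considerably more than the paper offers: the paper's entire ``proof'' of Theorem~\ref{theorem:moreTrue} is the single informal sentence following its statement, asserting that flipping an observation in the direction of its polarity can only make the satisfaction witness shorter and the violation witness longer. Your structural induction --- strengthened to all positions $i$, with polarity bookkeeping under negation, monotonicity of $\minmax$, $\maxmin$ and $\inc$, antitonicity of $\swap$, and the nested downward induction on $|\pi|-i$ to close the $\until$/$\eventually$ recursion (anchored at $i>|\pi|$, where the values are trace-independent for traces of equal length) --- is exactly the argument needed to turn that sentence into a proof, and every step you identify goes through.

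Two remarks. First, your blanket claim that $\pi\sqsubseteq_\phi\pi'$ implies $\pi\sqsubseteq_\psi\pi'$ for every subformula $\psi$ is false as stated (take $\phi=\neg p$, $\psi=p$); what is true, and what you actually use in the inductive step, is that the relation is inherited for subformulas under an even number of negations and reversed under an odd number, so this is only a slip in the preamble, not in the induction. Second, the orientation trouble you ran into is real and originates in the paper, not in your reading: under the Remark's definition ($(s,f)\unlhd(s',f')$ iff $s\ge s'$ and $f\le f'$, with the right-hand side ``closer to true''), the displayed inequality $d_{\pi'}(\phi,1)\unlhd d_\pi(\phi,1)$ places $d_\pi$ on the ``more true'' side, which contradicts both the theorem's own consequences about $\predictive$ and the paper's prose justification (which derives $s'\le s$ and $f'\ge f$ for $d_{\pi'}=(s',f')$, i.e.\ $d_\pi\unlhd d_{\pi'}$). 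Your final parsing of the symbol is therefore still backwards, but the inequality you actually prove --- $\pi'$ witnesses satisfaction no later and violation no earlier --- is the intended and correct one, and your derivation of the two $\predictive$ implications from it (a pair with $\impossible$ in the second, resp.\ first, coordinate can only be dominated by, resp.\ dominate, pairs of the same shape among the admissible pairs of Lemma~\ref{lemma:admissible}, then apply Lemma~\ref{prop:comp} and Theorem~\ref{theorem:refinement}) is sound.
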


Theorem~\ref{theorem:moreTrue} holds,
because we have that replacing an arbitrary observed value in $\pi$ by one with positive polarity in $\pi'$ always results  with  $d_{\pi}(\phi,1)=(s,f)$ and $d_{\pi'}(\phi,1)=(s',f')$ in $s' \leq s$ and $f' \geq f$, as with $\pi \sqsubseteq_\phi \pi'$ we have that $\pi'$ witnesses a satisfaction of $\phi$ not later than $\pi$ and $\pi'$ also witness a violation of $\phi$ not earlier than $\pi$.

\begin{table}[tb]
\center
\caption{Making a system ``more true''.}
  \begin{tabular}{c | c |  c | c}
      $\phi$ & $\pi$ & $\countsem(\phi,1)$ & $\predictive(\phi,1)$ \\
    \toprule
      \multirow{2}{*}{$p$} & $-$ &$ (-,0)$ & $\bot$ \\
      & $\top$ &$ (0,-) $& $\top$ \\
    \midrule
      \multirow{2}{*}{$ p \land \nextt \eventually p$} & $- - -$ &$ (-,0) $& $\bot$ \\
      & $\top - - $ &$(3,\infty)$ & $\pfalse$ \\
    \midrule
      \multirow{2}{*}{$\always p$} &  $- \top \top$ &$ (-,0) $& $\bot$ \\
      & $\top \top \top$ & $(\infty,3)$ & $\ptrue$ \\
    \midrule
      \multirow{2}{*}{$\eventually p$} &  $- - -$ &$ (3,\infty) $& $\pfalse$ \\
      & $\top - -$ & $(0,-)$ & $\top$ \\
    \midrule
      \multirow{2}{*}{$\eventually \always p$} &  $\top - \top - \top$ &$ (\infty,\infty) $& $\pfalse$ \\
      &$\top - \top \top \top$  & $ (\infty,\infty)$ & $\ptrue$ \\
    \midrule
      \multirow{2}{*}{$\always \eventually p$} &  $- - \top - -$ &$ (\infty,\infty) $& $\ptrue$ \\
      & $\top - \top - -$ & $ (\infty,\infty)$ & $\pfalse$ \\
    \midrule
      \multirow{2}{*}{$p \lor \nextt \always p$} &  $- \top \top$ &$ (\infty,3) $& $\ptrue$ \\
      & $\top \top \top$ & $(0,-)$ & $\top$ \\
    \bottomrule
  \end{tabular}
\label{tab:examplesMoreTrue}
\end{table}

In Table~\ref{tab:examplesMoreTrue} we give examples to illustrate the transition of one evaluation to another one. Note that it is possible to change from $\ptrue$ to $\pfalse$. However, this is only the predicated truth value that becomes ``worse'', because we have strengthened the prefix on which the prediction is based on, the values of $\countsem(\phi,i)$  don't change and remain the same is such a case.  

\section{Examples}


We demonstrate the strengths and weaknesses of our approach on the examples of LTL specifications and traces shown in Table~\ref{tab:ltl-example}. 
We fully develop these examples in Appendix~\ref{sec:examples}.  

\begin{table}
\centering

\begin{tabular}{| l c l | l l l |  l l l | }

\midrule

\multicolumn{3}{|c|}{Specifications} & \multicolumn{6}{|c|}{Traces} \\\midrule

$\psi_{1}$	& $\equiv$	& $\eventually \nextt g$  									& $\pi_{1}:$	& $g:$ 	& $\bot \bot \bot \bot$			&		
$\pi_{5}:$	& $r:$		& $\bot \top \top \top \top \bot \top \top$ 						\\

$\psi_{2}$	& $\equiv$	& $\always \nextt g$ 									& $\pi_{2}$	& $g:$	& $\top \top \top \top$			&
		& $g:$	& $\bot \top \bot \bot \bot \bot \top \bot$ 						\\

$\psi_{3}$	& $\equiv$	& $\always (r \rightarrow \eventually g)$ 						& $\pi_{3}$	& $r:$		& $\bot \top \bot \bot \top \bot$	&
$\pi_{6}:$	& $g:$	& $\top \top \bot \bot \top \top \bot \bot \top \top \bot \bot \top$			\\

$\psi_{4}$	& $\equiv$	& $\bigwedge_{i \in \{ 1,2\}} \always (r_{i} \rightarrow \eventually g_{i})$	&		& $g:$	& $\bot \bot \top \bot \bot \bot$	&
$\pi_{7}:$	& $g:$	& $\top \top \bot \bot \top \top \bot \bot \top \top \top \top \top$			\\

$\psi_{5}$	& $\equiv$	& $\always ((\nextt r) \until (\nextt \nextt g))$ 					& $\pi_{4}$	&$ r_{1}:$	& $\top \bot \top \bot \top \bot \top$		&	
$\pi_{8}$	& $r:$		& $\top \top \top \top \bot \bot$								\\

$\psi_{6}$	& $\equiv$	& $\eventually \always g \vee \eventually \always \neg g$ 				&		& $g_{1}:$	& $\bot \top \bot \top \bot \top \bot$  		&
		& $g:$	& $\top \bot \top \bot \top \bot$								\\

$\psi_{7}$	& $\equiv$	& $\always(\eventually r \vee \eventually g)$ 						& 		& $r_{2}:$	& $\bot \top \bot \top \bot \top \bot$	&
	& 		& 						\\

$\psi_{8}$	& $\equiv$	& $\always \eventually (r \vee g)$							&		& $g_{2}:$	& $\top \bot \top \bot \top \bot \top$ 	&
		                                                                                                       && 								\\	

$\psi_{9}$	& $\equiv$	& $\always \eventually r \vee \always \eventually g$ 					& 		&	& 	&
		&		&												\\

\midrule
\end{tabular}
\caption{Examples of LTL specifications and traces} 
\label{tab:ltl-example}
\end{table}

Table~\ref{tab:ltl-verdicts} summarizes the evaluation of our examples. The first and the second column denote the evaluated specification and trace. 
We use these examples to compare LTL with counting semantics (\textsf{c-LTL}) presented in this paper, to the other two popular finitary LTL interpretations, 
the $3$-valued LTL semantics~\cite{DBLP:conf/fsttcs/BauerLS06} (\textsf{3-LTL}) and 
LTL on trucated paths~\cite{DBLP:conf/cav/EisnerFHLMC03} (\textsf{t-LTL}). We recall that in \textsf{t-LTL} there is a distinction between a weak and a strong next 
operator. We denote by \textsf{t-LTL-s} (\textsf{t-LTL-w}) the specifications from our examples in which $\nextt$ is interpreted as the strong (weak) next operator and 
assume that we always give a strong interpretation to $\until$ and $\eventually$ and a weak interpretation to $\always$. 

\begin{table}
\centering
\begin{tabular}{| c |  c |  c | c | c | c |}
\midrule
Spec. & Trace & \textsf{c-LTL} & \textsf{3-LTL} & \textsf{t-LTL-s} & \textsf{t-LTL-w} \\
\midrule      
$\psi_{1}$	 &  $\pi_{1}$ & $\bot_P$ & \inconclusive & $\bot$ & $\top$ \\
\midrule
$\psi_{2}$	 &  $\pi_{2}$ & $\top_P$ & \inconclusive & $\bot$ & $\top$ \\
\midrule
$\psi_{3}$	 &  $\pi_{3}$ & $\bot_P$ & \inconclusive & $\bot$ & $\bot$\\
\midrule
$\psi_{4}$	 &  $\pi_{4}$ & $\top_P$ & \inconclusive & $\bot$ & $\bot$\\
\midrule
$\psi_{5}$	 &  $\pi_{5}$ & $\top_P$ & \inconclusive & $\bot$ &  $\top$ \\
\midrule
$\psi_{6}$	 &  $\pi_{6}$ & $\bot_P$ & \inconclusive & $\top$ & $\top$\\
\midrule
$\psi_{6}$	 &  $\pi_{7}$ & $\top_P$ & \inconclusive & $\top$ & $\top$ \\
\midrule
$\psi_{7}$	 &  $\pi_{8}$ & $\bot_P$ & \inconclusive & $\bot$ & $\bot$ \\
\midrule
$\psi_{8}$	 &  $\pi_{8}$ & $\bot_P$ & \inconclusive & $\bot$ & $\bot$\\
\midrule
$\psi_{9}$	 &  $\pi_{8}$ & $\top_P$ & \inconclusive & $\bot$ & $\bot$\\
\midrule

\end{tabular}
\caption{Comparison of different verdicts with different semantics}
\label{tab:ltl-verdicts}
\end{table}

There are two immediate observations that we can make regarding the results presented in Table~\ref{tab:ltl-verdicts}. First, the $3$-valued LTL gives for 
all the examples an \emph{inconclusive} verdict, a feedback that after all has little value to a verification engineer. The second observation is that 
the verdicts from \textsf{c-LTL} and \textsf{t-LTL} can differ quite a lot, which is not very surprising given the different strategies to interpret the unseen 
future. We now further comment on these examples, explaining in more details the results and highlighting the intuitive outcomes  of \textsf{c-LTL} for a 
large class of interesting LTL specifications.  

\paragraph{Effect of Nested Next} We evaluate with $\psi_{1}$ and $\psi_{2}$ the effect of nesting $\nextt$ in an $\eventually$ and an $\always$ 
formula, respectively. We make a prediction on $\nextt g$ at the end of the trace before evaluating $\eventually$ and $\always$. As a consequence, 
we find that $(\psi_{1}, \pi_{1})$ evaluates to \textsf{presumably false}, while $(\psi_{2}, \pi_{2})$ evaluates to \textsf{presumably true}.
In \textsf{t-LTL}, this class of specification is very sensitive to the weak/strong interpretation 
of next, as we can see from the verdicts.

\paragraph{Request/Grants} We evaluate the request/grant property $\psi_{3}$ from the motivating example on the trace $\pi_{3}$. 
We observe that $r$ at cycle $2$ is followed by $g$ at cycle $3$, while $r$ at cycle $5$ is not followed by $g$ at cycle $6$. Hence, 
$(\psi_{3}, \pi_{3})$ evaluates to \textsf{presumably false}.

\paragraph{Concurrent Request/Grants} We evaluate the specification $\psi_{4}$ against the trace $\pi_{4}$.  In this example $r_{1}$ is triggered at 
even time stamps and $r_{2}$ is triggered at odd time stamps. Every request is granted in one cycle. It follows that regardless of the time when the trace 
ends, there is one request that is not granted yet. We note that $\psi_{4}$ is a conjunction of two basic request/grant properties and we make independent 
predictions for each conjunct. Every basic request/grant property is evaluated to \textsf{presumably true}, hence $(\psi_{4}, \pi_{4})$ evaluates to \textsf{presumably true}. 
At this point, we note that in \textsf{t-LTL}, every request that is not granted by the end of the trace results in the property violation, regardless of the 
past observations.

\paragraph{Until} We use the specification $\psi_{5}$ and the trace $\pi_{5}$ to evaluate the effect of $\until$ on the predictions. The specification 
requires that $\nextt r$ continuously holds until $\nextt \nextt g$ becomes true. We can see that in $\pi_{5}$ $\nextt r$ is witnessed at cycles 
$1-4$, while $\nextt \nextt g$ is witnessed at cycle $5$. We can also see that $\nextt r$ is again witnessed from cycle $6$ until the end of the trace at cycle $8$. 
As a consequence, $(\psi_{5}, \pi_{5})$ is evaluated to \textsf{presumably true}.

\paragraph{Stabilization} The specification $\psi_{6}$ says that the value of $g$ has to eventually stabilize to either true or false. We evaluate the formula on 
two traces $\pi_{6}$ and $\pi_{7}$. In the trace $\pi_{6}$, $g$ alternates between true and false every two cycles and becomes true in the last cycle. Hence, there is no 
sufficiently long witness of trace stabilization $(\psi_{6}, \pi_{6})$ evaluates to \textsf{presumably false}. In the trace $\pi_{7}$, $g$ also alternates between true and false 
every two cycles, but in the last four cycles $g$ remains continuously true. As a consequence, $(\psi_{6}, \pi_{7})$ evaluates to \textsf{presumably true}. This example also 
illustrates the importance of when the trace truncation occurs. If both $\pi_{6}$ and $\pi_{7}$ were truncated at cycle $5$, both $(\psi_{6}, \pi_{6})$ and 
$(\psi_{6}, \pi_{7})$ would evaluate to \textsf{presumably false}. We note that $\psi_{6}$ is satisfied by all traces in \textsf{t-LTL}.

\paragraph{Sub-formula Domination} The specification $\psi_{7}$ exposes a weakness of our approach. It requires that in every cycle, either $r$ or $g$ is witnessed in 
some unbounded future.  With our approach, $(\psi_{7}, \pi_{8})$ evaluates to \textsf{presumably false}. This is against our intuition because we have observed that $g$ 
becomes regularly true very second time step. However, in this example our prediction for $\eventually r$ dominates over the prediction for $\eventually g$, leading to 
the unexpected \textsf{presumably false} verdict. On the other hand, \textsf{t-LTL} interpretation of the same specification is dependent only on the last value of 
$r$ and $g$.

\paragraph{Semantically Equivalent Formulas} We now demonstrate that our approach may give different answers for semantically equivalent formulas. For instance, 
both $\psi_{8}$ and $\psi_{9}$ are semantically equivalent to $\psi_{7}$. We have that $(\psi_{8}, \pi_{8})$ evaluates to \textsf{presumably false}, while 
$(\psi_{9}, \pi_{8})$ evaluates to \textsf{presumably true}. We note that  \textsf{t-LTL} verdicts are stable for semantically different formulas. 

\section{Conclusion}

We have presented a novel finitary semantics for LTL
that uses the history of satisfaction and violation in a finite trace
to predict whether the co-safety and safety aspects of a formula will
be satisfied in the extension of the trace to an infinite one. We
claim that the semantics closely follow human intuition when
predicting the truth value of a trace.  The presented examples 
(incl. non-monitorable LTL properties) illustrate our approach 
and support this claim.

Our definition of the semantics is trace-based, but it is easily
extended to take an entire database of traces into account, which may
make the approach more precise. Our approach uses a very simple form
of learning to predict the future. It would be interesting to consider
more elaborate learning methods to make better predictions.

\noindent{}{\bf Acknowledgments.}

\noindent This work was partially supported by the European Union (IMMORTAL
project, grant no. 644905),  the Austrian FWF (National Research 
Network RiSE/SHiNE S11405-N23 and S11406-N23), the SeCludE 
 project funded by UnivPM.
This work has been conducted within the ENABLE-S3 project that 
has received funding from the ECSEL Joint Undertaking under 
Grant Agreement no. 692455. This Joint Undertaking receives 
support from the European UnionÕs HORIZON  2020 research and 
innovation programme and Austria, Denmark, Germany, Finland, 
Czech Republic, Italy, Spain, Portugal, Poland, Ireland, Belgium, 
France, Netherlands, United Kingdom, Slovakia, Norway.

\bibliographystyle{plain}
\bibliography{bibliography}
  
\appendix

\section{Proofs}

\subsection{Proof for Lemma~\ref{lemma:soundness}}
\label{proof:lemma:soundness}

\begin{proof} 

Let be $i \in \mathbb{N}_{>0}$, $i \leq n = |\pi|$, $n > 0$ and $\pi_{i \cdots n}$ a suffix of $\pi$ starting at position $i$.
\begin{description}

{\small \item[Base case $p$] 
$$ \threeval(p,i) = \top \Rightarrow p \in \pi_i,  i \leq |\pi|  \Rightarrow  \forall \pi' \in \Pi^{\omega}, \pi_i \cdot \pi' \models p \Rightarrow [\pi_{i \cdots n}  \models_{3} p] = \top $$
$$ \threeval(p,i) = \bot \Rightarrow p \not \in \pi_i, i \leq |\pi| \Rightarrow  \forall \pi' \in \Pi^{\omega}, \pi_i \cdot \pi' \not \models p \Rightarrow [\pi_{i \cdots n} \models_{3} p] = \bot $$

\item[Induction step for $\neg \varphi$] :
$$\underbrace{(\threeval(\varphi,i) = \top  \Rightarrow  [\pi_{i \cdots n} \models_{3} \varphi] = \top)}_\text{Assumption step} \Rightarrow \underbrace{\underbrace{(\threeval(\neg \varphi,i) = \bot}_\text{$\iff \threeval(\varphi,i) = \top $}}_\text{True by assumption}   \Rightarrow  \underbrace{\underbrace{[\pi_{i \cdots n} \models_{3}  \neg \varphi] = \bot)}_\text{$\iff [\pi_{i \cdots n} \models_{3} \varphi] = \top$}}_\text{True by assumption} $$

$$\underbrace{(\threeval(\varphi,i) = \bot  \Rightarrow  [\pi_{i \cdots n} \models_{3} \varphi] = \bot)}_\text{Assumption step} \Rightarrow \underbrace{\underbrace{(\threeval(\neg \varphi,i) = \top}_\text{$\iff \threeval(\varphi,i) = \bot$}}_\text{True by assumption}    \Rightarrow  \underbrace{\underbrace{[\pi_{i \cdots n} \models_{3}  \neg \varphi] = \top)}_\text{$\iff [\pi_{i \cdots n} \models_{3} \varphi] = \top$}}_\text{True by assumption} $$

%

\item[Induction step for $\varphi_1 \vee \varphi_2$] 

\[ 
\left(
    \begin{array}{c}
    \threeval(\varphi_1,i) = \bot  \Rightarrow  [\pi_{i \cdots n} \models_{3} \varphi_1] = \bot  \\
    \threeval(\varphi_2,i) = \bot  \Rightarrow  [\pi_{i \cdots n} \models_{3} \varphi_2] = \bot 
    \end{array}
  \right) \Rightarrow   \threeval(\varphi_1 \vee \varphi_2,i) = \bot  \Rightarrow  [\pi_{i \cdots n} \models_{3} \varphi_1 \vee \varphi_2] = \bot 
\]

\[ 
\left(
    \begin{array}{c}
    \threeval(\varphi_1,i) = \top  \Rightarrow  [\pi_{i \cdots n} \models_{3} \varphi_1] = \top \\
    \threeval(\varphi_2,i) = \top  \Rightarrow  [\pi_{i \cdots n} \models_{3} \varphi_2] = \top 
    \end{array}
  \right) \Rightarrow   \threeval(\varphi_1 \vee \varphi_2,i) = \top  \Rightarrow  [\pi_{i \cdots n} \models_{3} \varphi_1 \vee \varphi_2] = \top 
\]

\[ 
\left(
    \begin{array}{c}
    \threeval(\varphi_1,i) = \top  \Rightarrow  [\pi_{i \cdots n} \models_{3} \varphi_1] = \top \\
    \threeval(\varphi_2,i) = \bot  \Rightarrow  [\pi_{i \cdots n} \models_{3} \varphi_2] = \bot 
    \end{array}
  \right) \Rightarrow   \threeval(\varphi_1 \vee \varphi_2,i) = \top  \Rightarrow  [\pi_{i \cdots n} \models_{3} \varphi_1 \vee \varphi_2] = \top 
\]

\[ 
\left(
    \begin{array}{c}
    \threeval(\varphi_1,i) = \bot  \Rightarrow  [\pi_{i \cdots n} \models_{3} \varphi_1] = \bot \\
    \threeval(\varphi_2,i) = \top  \Rightarrow  [\pi_{i \cdots n} \models_{3} \varphi_2] = \top 
    \end{array}
  \right) \Rightarrow   \threeval(\varphi_1 \vee \varphi_2,i) = \top  \Rightarrow  [\pi_{i \cdots n} \models_{3} \varphi_1 \vee \varphi_2] = \top 
\]

In the first case we have that: 
$$\threeval(\varphi_1 \vee \varphi_2,i) = \bot \iff (\threeval(\varphi_1,i) = \bot \wedge \threeval(\varphi_2,i) = \bot ) $$ 
$$\Rightarrow \underbrace{(  [\pi_{i \cdots n} \models_{3} \varphi_1] = \bot )\wedge ( [\pi_{i \cdots n} \models_{3} \varphi_2] = \bot))}_\text{True by assumption} \Rightarrow  [\pi_{i \cdots n} \models_{3} \varphi_1 \vee \varphi_2] = \bot  $$

For the other three cases we have: 

$$\threeval(\varphi_1 \vee \varphi_2,i) = \top \iff 
\left(
    \begin{array}{l}
    (\threeval(\varphi_1,i) = \top \wedge \threeval(\varphi_2,i) = \top ) \; \vee  \\
    (\threeval(\varphi_1,i) = \top \wedge \threeval(\varphi_2,i) = \bot ) \; \vee  \\
    (\threeval(\varphi_1,i) = \bot \wedge \threeval(\varphi_2,i) = \top ) 
    \end{array}
  \right)
$$ 
$$\Rightarrow \underbrace{
\left(
    \begin{array}{l}
    (  [\pi_{i \cdots n} \models_{3} \varphi_1] = \top )\wedge ( [\pi_{i \cdots n} \models_{3} \varphi_2] = \top)) \; \vee  \\
     (  [\pi_{i \cdots n} \models_{3} \varphi_1] = \top )\wedge ( [\pi_{i \cdots n} \models_{3} \varphi_2] = \bot)) \; \vee  \\
     (  [\pi_{i \cdots n} \models_{3} \varphi_1] = \bot )\wedge ( [\pi_{i \cdots n} \models_{3} \varphi_2] = \top)) 
    \end{array}
  \right)}_\text{True by assumption} \Rightarrow  [\pi_{i \cdots n} \models_{3} \varphi_1 \vee \varphi_2] = \top  $$

\item[Induction step for $\nextt \varphi$] We can prove that for $ i+1 \leq n$:
$$\underbrace{(\threeval(\varphi,i+1) = \top  \Rightarrow   [\pi_{i+1 \cdots n}  \models_{3} \varphi] = \top)}_\text{Assumption step} \Rightarrow \underbrace{\underbrace{(\threeval(\nextt \varphi,i) = \top}_\text{$\iff \threeval(\varphi,i+1) = \top $}}_\text{True by assumption}  \Rightarrow \underbrace{ \underbrace{[\pi_{i \cdots n}  \models_{3}  \nextt \varphi] = \top)}_\text{$\iff [\pi_{i+1 \cdots n}  \models_{3} \varphi] = \top)$}}_\text{True by assumption} $$

$$\underbrace{(\threeval(\varphi,i+1) = \bot  \Rightarrow  [\pi_{i+1 \cdots n}  \models_{3} \varphi] = \bot)}_\text{Assumption step}  \Rightarrow  \underbrace{ \underbrace{(\threeval(\nextt \varphi,i) = \bot}_\text{$\iff \threeval(\varphi,i+1) = \bot $}}_\text{True by assumption}    \Rightarrow 
 \underbrace{ \underbrace{  [\pi_{i \cdots n}  \models_{3}  \nextt \varphi] = \bot)}_\text{$[\pi_{i+1 \cdots n}  \models_{3} \varphi] = \bot$}}_\text{True by assumption} $$
%
%

\item[Induction step for $\eventually \varphi$] We  can prove that $\exists j, i \leq j \leq |\pi|$:
$$\underbrace{(\threeval(\varphi,j) = \top  \Rightarrow   [\pi_{j \cdots n}  \models_{3} \varphi] = \top)}_\text{Assumption step} \Rightarrow ( \underbrace{\underbrace{\threeval(\eventually \varphi,i) = \top}_\text{$\iff \threeval(\varphi,j) = \top$}}_\text{True by assumption}  \Rightarrow  \underbrace{\underbrace{[\pi_{i \cdots n}  \models_{3}  \eventually \varphi] = \top}_\text{$\iff [\pi_{j \cdots n}  \models_{3} \varphi] = \top$}}_\text{True by assumption})$$
$$(
\underbrace{\threeval(\varphi,j) = \bot  \Rightarrow  [\pi_{j \cdots n}  \models_{3} \varphi] = \bot)}_\text{Assumption step} \Rightarrow \underbrace{(\underbrace{\threeval(\eventually \varphi,i) = \bot}_\text{This is always false}  \Rightarrow  
\underbrace{[\pi_{i \cdots n}  \models_{3}  \eventually \varphi] = \bot)}_\text{This is always false}}_\text{This is true}$$

\item[Induction step for $\varphi_1 \until \varphi_2$] We can prove that:

\[ 
\left(
    \begin{array}{c}
    \exists j, i \leq j \leq |\pi| \mbox{ s.t.}\\
    \threeval(\varphi_1,j) = \top  \Rightarrow  [\pi_{j \cdots n} \models_{3} \varphi_1] = \top \mbox{ } \wedge \\
     \forall k, i \leq k < j \mbox{,}\\
    \threeval(\varphi_2,k) = \top  \Rightarrow  [\pi_{k \cdots n} \models_{3} \varphi_2] = \top
    \end{array}
  \right) \Rightarrow   \underbrace{\threeval(\varphi_1\; \until \; \varphi_2,i) = \top}_\text{True by  assumption}  \Rightarrow   \underbrace{[\pi_{i \cdots n} \models_{3} \varphi_1 \; \until \; \varphi_2] = \top}_\text{True by assumption}  
\]

\[ 
\left(
    \begin{array}{c}
    \forall j, i \leq j \leq |\pi| \mbox{ s.t.}\\
    \threeval(\varphi_1,j) = \top  \Rightarrow  [\pi_{j \cdots n} \models_{3} \varphi_1] = \top \mbox{ } \wedge \\
     \exists k, i \leq k < j \mbox{,}\\
    \threeval(\varphi_2,k) = \top  \Rightarrow  [\pi_{k \cdots n} \models_{3} \varphi_2] = \top
    \end{array}
  \right) \Rightarrow   \underbrace{\threeval(\varphi_1\; \until \; \varphi_2,i) = \bot}_\text{True by assumption}   \Rightarrow   \underbrace{[\pi_{i \cdots n} \models_{3} \varphi_1 \; \until \; \varphi_2] = \bot}_\text{True by  assumption}
\]

}

\end{description}

\end{proof} 

\subsection{Proof for Lemma~\ref{lemma:admissible}}
\label{proof:lemma:admissible}

Let $s,f \in \mathbb{N}_0 \cup \{\infty \}$.
We first define the following sets:
\begin{itemize}
\item $P^+_{i,  \pi} = \{ \; (s, -)\; | \; s \leq  |\pi| - i \ \}$
\item $P^-_{i,  \pi}  = \{ \; (-, f)\; | \; f \leq  |\pi| - i \ \}$
\item $P^{?}_{i,  \pi}  = \{ \; (s, f)\; | \; s, f > |\pi| - i \ \}$
\item  $P_{i,  \pi}  = P^+_{i,  \pi} \cup P^-_{i,  \pi}  \cup P^{?}_{i,  \pi}   $
\end{itemize}

The set $P^+_{i,  \pi}$ represents the set of all the possible pairs of the 
form $(a,-)$, the set $P^-_{i,  \pi}$ represents the set of all the possible 
pairs of the form $(-,a)$ while $P^{?}_{i,  \pi}$ represents the set of 
all the possible pairs of the form $(b_{1}, b_{2})$, $(b_{1},\infty)$, $(\infty, b_{2})$ or 
$(\infty, \infty)$, where $a \leq |\pi| - i$ and $b_{j} > |\pi| - i$ for $j \in \{1,2\}$. 

We now provide and prove the following proposition that will be used to
prove later Lemma~\ref{lemma:admissible}.

\begin{proposition}
\label{prop:endoftrace}
Let $\pi \in \Pi^{*}$ be a finite trace, $\phi$ an LTL formula and $i\in \mathbb{N}$ an index. 
Then we have that $\forall i  > |\pi|, \countsem(\phi,i) \in P^{?}_{i,  \pi}$.
\end{proposition}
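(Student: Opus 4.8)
The statement is that once we are strictly past the end of the trace, the counting semantics always returns a pair of the form $P^?_{i,\pi}$, i.e.\ both components are either $\infty$ or strictly greater than $|\pi|-i$ (which is negative here). Since for $i > |\pi|$ we have $|\pi| - i < 0$, the claim amounts to showing that $\countsem(\phi,i)$ never has a component equal to a finite natural number $a$ with $a \le |\pi|-i$; in fact for $i>|\pi|$ it suffices to show no component is a natural number at all that is $\le |\pi|-i$, and more concretely that whenever a component is in $\mathbb{N}_0$ it is at least $1$ larger than $|\pi|-i$. The natural approach is \textbf{structural induction on $\phi$}, and the whole proof is essentially bookkeeping about how $\oplus 1$, $\swap$, $\minmax$, $\maxmin$ act on the membership predicate ``$(s,f)\in P^?_{i,\pi}$''.

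\textbf{Key steps.} First I would record the closure properties of $P^?_{i,\pi}$ that make the induction go through: (i) it is closed under $\swap$ (the condition is symmetric in $s$ and $f$); (ii) it is closed under $\minmax$ and $\maxmin$ of two pairs both lying in $P^?_{i,\pi}$, because $\min$ and $\max$ of two values each exceeding a threshold still exceed it, and likewise for $\infty$; (iii) if $(s,f)\in P^?_{i+1,\pi}$ then $(s,f)\oplus 1 \in P^?_{i,\pi}$, since $s\oplus 1 > (|\pi|-(i{+}1)) + 1 = |\pi| - i$ when $s$ is finite, and $\infty \oplus 1 = \infty$; (iv) the constant pair $(-,\infty)$ lies in $P^?_{i,\pi}$ trivially. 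Then I would do the case analysis. Base case $\phi = p$: for $i > |\pi|$ the semantics gives $(0,0)$, and $0 > |\pi| - i$ since $|\pi|-i \le -1$, so $(0,0)\in P^?_{i,\pi}$. Case $\neg\phi$: apply (i) to the induction hypothesis at the same $i$. Case $\phi_1\vee\phi_2$: apply (ii). Case $\nextt\phi$: note $i > |\pi|$ implies $i+1 > |\pi|$, so the IH gives $\countsem(\phi,i+1)\in P^?_{i+1,\pi}$, then apply (iii). Case $\eventually\phi$: since $i > |\pi|$ we are in the second branch, $\countsem(\phi,i)\minmax(-,\infty)$; the IH gives $\countsem(\phi,i)\in P^?_{i,\pi}$, $(-,\infty)\in P^?_{i,\pi}$ by (iv), and closure (ii) finishes it. Case $\phi\until\psi$: identical, using the second branch $\countsem(\psi,i)\minmax(\countsem(\phi,i)\maxmin(-,\infty))$ together with (ii), (iv) and the IH for $\phi$ and $\psi$ at position $i$.

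\textbf{Main obstacle.} There is no deep obstacle; the only thing to be careful about is getting the threshold arithmetic exactly right, in particular that $P^?$ at position $i+1$ maps into $P^?$ at position $i$ under $\oplus 1$ (the index shift must line up with the increment), and that the degenerate regime $|\pi|-i < 0$ is handled uniformly — e.g.\ that literal constants like $0$ in the proposition base case and $\infty$ everywhere genuinely satisfy ``$> |\pi|-i$''. So the plan is: state the four closure facts about $P^?_{i,\pi}$ as a short preliminary observation, then run the six-case structural induction, invoking the appropriate closure fact in each case and the induction hypothesis at position $i$ (or $i+1$ for $\nextt$), noting in every temporal case that $i>|\pi|$ forces us into the ``end of trace'' branch of Definition~\ref{def:counting_finitary}.
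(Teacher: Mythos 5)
Your proposal follows essentially the same route as the paper: a structural induction on $\phi$ in which every case reduces to the observation that $\swap$, $\minmax$, $\maxmin$ and the index-shifted increment preserve membership in $P^{?}_{i,\pi}$, with each temporal case forced into the end-of-trace branch of the definition. One small caveat: $(-,\infty)$ is not literally an element of $P^{?}_{i,\pi}$ as the paper defines it (components range over $\mathbb{N}_0\cup\{\infty\}$), and in the until case the intermediate pair $\countsem(\phi,i)\maxmin(-,\infty)=(-,f_1)$ likewise falls outside $P^{?}_{i,\pi}$; the paper handles this by computing the nested $\maxmin$/$\minmax$ explicitly and noting that the outer $\minmax$ with $\countsem(\psi,i)$ restores a first component in $\mathbb{N}_0\cup\{\infty\}$, so your closure facts (ii) and (iv) need that one extra line of computation rather than applying verbatim.
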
 

\begin{proof}
\begin{description}

{\small 
\item[Base case $\varphi ::= p$.] $\countsem(p,i)  =   (0,0)  \in P^{?}_{i,  \pi}   \text{  for } i > |\pi|  $
\item[Induction step $ \countsem(\varphi,i) \in P^{?}_{i,  \pi} \Rightarrow \countsem(\neg \varphi,i) \in P^{?}_{i,  \pi} $.] This is true because $\countsem(\varphi,i) \in P^{?}_{i,  \pi} \Rightarrow \swap(\countsem(\varphi,i))   \in  P^{?}_{i,  \pi}$. 
\item[Induction step $ \countsem(\varphi_1,i) \in P^{?}_{i,  \pi},  \countsem(\varphi_2,i) \in P^{?}_{i,  \pi}, \Rightarrow \countsem(\varphi_1 \vee \varphi_2,i) \in P^{?}_{i,  \pi} $.] 
This is true because if  $\countsem(\varphi_1,i) \in P^{?}_{i,  \pi},  \countsem(\varphi_2,i) \in P^{?}_{i,  \pi}$ then $\countsem(\varphi_1,i) = (s_1, f_1) \mbox{ and } \countsem(\varphi_2,i) =  (s_2, f_2)$
such that  $s_1,s_2, f_1, f_2 \in \mathbb{N}_0 \cup \{\infty \}$ and  $s_1,s_2, f_1, f_2 >  |\pi| - i $. Then $\countsem(\varphi_1 \vee \varphi_2,i) = (s_1, f_1) \minmax (s_2, f_2) = (\underbrace{\min(s_1,s_2)}_\text{$> |\pi| -i $},  \underbrace{\max(f_1,f_2)}_\text{$> |\pi| -i $} ) \in  P^{?}_{i,  \pi}$.
\item[Induction step $\forall i > |\pi|, \countsem(\varphi,i+1) \in P^{?}_{i+1,  \pi} \Rightarrow \countsem(\nextt \varphi,i) \in P^{?}_{i,  \pi} $.]  
If $\countsem(\varphi,i+1) \in P^{?}_{i+1,  \pi}$ then $\countsem(\varphi,i+1) = (s_1,f_1)$ 
such that  $s, f \in \mathbb{N}_0 \cup \{\infty \}$ and  $s,f >  |\pi| - i - 1$.  Then we have that  $ \countsem(\nextt \varphi, i) =  \underbrace{(s,f) \inc}_\text{$s \oplus 1, f \oplus 1 >  |\pi| - i $} \in  P^{?}_{i+1,  \pi}$.
\item[Induction step $\forall i > |\pi|, \countsem(\varphi_1,i),  \countsem(\varphi_2,i) \in P^{?}_{i,  \pi}, \Rightarrow \countsem(\varphi_1 \until \varphi_2,i) \in P^{?}_{i,  \pi} $.] 
 If  $\countsem(\varphi_1,i) \in P^{?}_{i,  \pi},  \countsem(\varphi_2,i) \in P^{?}_{i,  \pi}$ then $\countsem(\varphi_1,i) = (s_1, f_1) \mbox{ and } \countsem(\varphi_2,i) =  (s_2, f_2)$
such that  $s_1,s_2, f_1, f_2 \in \mathbb{N}_0 \cup \{\infty \}$ and  $s_1,s_2, f_1, f_2 >  |\pi| - i $. Using the definition of $\countsem$ when $i > |\pi|$, 
$\countsem(\varphi_1 \until \varphi_2,i) =  \underbrace{\Big( (s_1, f_1)  \minmax \Big( \underbrace{(s_2,f_2)  \maxmin  (\impossible, \infty)}_\text{$= (\max (s_2, \impossible), \min(f_2, \infty)) = (\impossible, f_2)$} \Big)  \Big)}_\text{$= (\min (s_1, \impossible ), \max (f_1,f_2) ) = (s_1,  \max (f_1,f_2) ) \in P^{?}_{i,  \pi} $} \in P^{?}_{i,  \pi}.$

\item[Induction step $ \countsem(\varphi,i) \in P^{?}_{i,  \pi} \Rightarrow \countsem(\eventually \varphi, i) \in P^{?}_{i,  \pi} $.]  If $\countsem(\varphi,i) \in P^{?}_{i,  \pi}$ then 
$\countsem(\varphi,i) = (s, f)$ such that  $s, f \in \mathbb{N}_0 \cup \{\infty \}$ and  $s, f >  |\pi| - i $. 
Then following the definition of $\countsem$ when $i > |\pi|$, 
$\countsem(\eventually \varphi, i) = \countsem(\varphi, i)\; \minmax (-, \infty) = (\underbrace{\min (s_1, \impossible)}_\text{$s_1 > |\pi| - i$}, \underbrace{\max(f_1, \infty)}_\text{$\infty > |\pi| - i$} ) \in P^{?}_{i,  \pi}.$ 
}
\end{description}
\end{proof}

In the following we now prove Lemma~\ref{lemma:admissible}. 
\begin{proof} 

We now need to prove the closure of $P_{i,  \pi}$ under $\countsem(\phi,i)$
inductively on the structure of the LTL formula by considering all 
the possible cases.

\begin{description}

{\small \item[Base case $\varphi ::= p$] 

$\begin{array}{lcl}
\countsem(p,i) & = & \begin{cases}
 (0, -) \in P^+_{i,  \pi}  & \text{if } i \leq |\pi| \wedge p \in \pi_{i} \\
 (-, 0)  \in P^-_{i,  \pi}  & \text{if } i \leq |\pi| \wedge p \not \in \pi_{i} \\
 (0,0)  \in P^{?}_{i,  \pi}  & \text{if } i > |\pi|  \\
\end{cases} 
\end{array}
$
}
\\
{\small \item[Induction step $ \countsem(\varphi,i)  \in P_{i,  \pi} \Rightarrow  \countsem(\neg \varphi,i)  \in P_{i,  \pi}  $]  We have  three cases:
\begin{description}
\item [($\countsem(\varphi,i)  \in P^+_{i,  \pi} $)] $    \countsem(\neg \varphi,i)  =    \swap(\countsem(\varphi,i))   \in P^-_{i,  \pi} $ 
\item [($\countsem(\varphi,i)  \in P^-_{i,  \pi} $)] $ \countsem(\neg \varphi,i)  =    \swap(\countsem(\varphi,i))   \in P^+_{i,  \pi} $ 
\item [($\countsem(\varphi,i)  \in P^{?}_{i,  \pi} $)]$ \countsem(\neg \varphi,i)  =    \swap(\countsem(\varphi,i))   \in  P^{?}_{i,  \pi}   $\\
\end{description}}
{\small \item[Induction step  $A = \countsem(\varphi_1,i)  \in P_{i,  \pi},  B = \countsem(\varphi_2,i)  \in P_{i,  \pi} \Rightarrow \countsem(\varphi_1  \vee \varphi_2,i)  \in P_{i,  \pi}   $] 
 We need to consider the following cases:
\begin{description}
\item [($A \in P^+_{i,  \pi} $, $B \in P^+_{i,  \pi} $)] $A=(s_1,-), B=(s_2,-), \underbrace{A \minmax B= (\min(s_1,s_2),-)}_\text{$\in    P^+_{i,  \pi}  \subset P_{i,  \pi}  $}$
\item [($A \in P^-_{i,  \pi} $, $B \in P^+_{i,  \pi} $)] $ A \minmax B = B \in P^+_{i,  \pi}  \subset P_{i,  \pi} $
\item [($A \in P^{?}_{i,  \pi}  $, $B \in P^+_{i,  \pi} $)]  $A=(s_1,f_1), B=(s_2,-),\; s_1 > |\pi| - i, s_2 \leq |\pi| - i  \Rightarrow s_2 < s_1$  \\

$A \minmax B= (\min(s_1,s_2),\max(f_1,-)) = (s_2, -) = B \in    P^+_{i,  \pi}  \subset P_{i,  \pi} $\\

\item [($A \in P^+_{i,  \pi}, B \in  P^-_{i,  \pi} $ )]  Since $\minmax$  is commutative see the case  ($A \in  P^{-}_{i,  \pi}$, $B \in P^+_{i,  \pi}$) \\

\item [($A \in P^-_{i,  \pi}, B \in  P^-_{i,  \pi}$)]  $A=(-,f_1)$, $B=(-,f_2), \underbrace{A \minmax B = \{(-,\max(f_1,f_2))\}}_\text{$\in  P^-_{i,  \pi} \subset P_{i,  \pi} $}$

\item [($A \in P^{?}_{i,  \pi}, B \in  P^-_{i,  \pi}$)] $A=(s_1,f_1), B=(-,f_2),\;  f_1 > |\pi| - i, f_2 \leq |\pi| - i  \Rightarrow f_1 > f_2$\\

$A \minmax B= (\min(s_1,-),\max(f_1,f_2)) = (s_1, f_1) = A \in     P^{?}_{i,  \pi} \subset P_{i,  \pi} $\\

\item [($A \in P^+_{i,  \pi}, B \in  P^{?}_{i,  \pi} $)] Since $\minmax$  is commutative see the case  ($A \in  P^{?}_{i,  \pi}$, $B \in P^+_{i,  \pi}$).
\item [($A \in P^-_{i,  \pi}, B \in  P^{?}_{i,  \pi} $)] Since $\minmax$ is commutative see the case  ($A \in  P^{?}_{i,  \pi}$, $B \in P^-_{i,  \pi}$).
\item [($A \in  P^{?}_{i+1,  \pi}, B \in P^{?}_{i,  \pi}$)] $A=(s_1,f_1), B=(s_2,f_2),\;  s_1,f_1,s_2,f_2 > |\pi| - i$\\

$A \minmax B= (\min(s_1,s_2),\max(f_1,f_2))  \in  P^{?}_{i,  \pi} \subset P_{i,  \pi} $\\

\end{description}}

{\small  \item[Induction step $A = \countsem(\varphi,i+1) \in  P_{i+1,  \pi} \Rightarrow   \countsem(\nextt \varphi,i) \in  P_{i,  \pi} $] We have  three cases:
\begin{description}
\item [($A \in P^+_{i+1, \pi}$)] $ \underbrace{\countsem(\nextt \varphi,i) = A \inc = (s_1+1, -)}_\text{$s_1 \leq |\pi| - i -1 \Rightarrow s_1 + 1 \leq |\pi| - i$} \in  P^+_{i,  \pi}$
\item [($A \in P^-_{i+1,  \pi}$)]  $ \underbrace{\countsem(\nextt \varphi,i) = A \inc = (-,f_1+1)}_\text{$f_1 \leq |\pi| - i -1 \Rightarrow f_1 + 1 \leq |\pi| - i$} \in  P^-_{i,  \pi}$
\item [($A \in P^{?}_{i+1,  \pi}$)]  $ \underbrace{\countsem(\nextt \varphi,i) = A \inc = (s_1 \oplus 1, f_1 \oplus 1)  \in   P^{?}_{i,  \pi}}_\text{$s_1 > |\pi| - i -1 \Rightarrow s_1 \oplus 1 > |\pi| - i, f_1 > |\pi| - i -1 \Rightarrow f_1 \oplus 1 > |\pi| - i$}$
\\\\
\end{description}}

{\small \item[Induction step  $A = \countsem(\varphi,j) \in  P_{j,  \pi} \Rightarrow   \countsem(\eventually \varphi,i) \in  P_{i,  \pi} $.] 
$$ \mbox{if } i > |\pi| \Rightarrow A \in   P^{?}_{i,  \pi} \mbox{ } \Rightarrow \mbox{ } \countsem(\eventually \varphi,i) \in P^{?}_{i,  \pi} \subset   P_{i,  \pi}  \mbox{ (See Prop.~\ref{prop:endoftrace})}$$ 
$$\mbox{if } i \leq |\pi|  \Rightarrow \countsem( \eventually \varphi, i) = \countsem(\phi,  i)  \minmax  \countsem(\nextt \! (\eventually \varphi), i ) $$

$$ \countsem( \eventually \varphi, i+1)  \in  P_{i+1,  \pi}  \Rightarrow  \countsem(\phi,  i)  \minmax  \countsem(\nextt \! (\eventually \varphi), i ) \in P_{i,  \pi}  $$
$$\mbox{and we proved that at least when } i+1 > |\pi|,  \countsem(\eventually \varphi,i+1) \in   P^{?}_{i+1,  \pi} \subset  P_{i+1,  \pi}   $$.
}
{\small \item[Induction step  $A = \countsem(\varphi_1,i)  \in P_{i,  \pi},  B = \countsem(\varphi_2,i)  \in P_{i,  \pi} \Rightarrow \countsem(\varphi_1  \until \varphi_2,i)  \in P_{i,  \pi} $.] 
$$\mbox{if } i > |\pi| \Rightarrow A,B \in  A \in P^{?}_{i,  \pi} \Rightarrow \mbox{ } \countsem(\varphi_1  \until \varphi_2,i)  \in P^{?}_{i,  \pi}  \subset P_{i,  \pi} \mbox{ (See Prop.~\ref{prop:endoftrace})}$$  

$$ \mbox{if } i \leq |\pi|  \Rightarrow \countsem(\varphi_1  \until \varphi_2,i) =  A \minmax (B \maxmin ( \countsem(\nextt (\varphi_1  \until \varphi_2),i) )) $$

$$ \countsem(\varphi_1  \until \varphi_2,i+1) \in  P_{i+1,  \pi}  \Rightarrow A \minmax (B \maxmin ( \countsem(\nextt (\varphi_1  \until \varphi_2),i) )) \in P_{i,  \pi}  $$
$$\mbox{and we  proved that at least when } i+1 > |\pi|,   \countsem(\varphi_1  \until \varphi_2,i+1) \in  P^{?}_{i+1,  \pi} \subset  P_{i+1,  \pi}  $$.
}
\end{description}
\end{proof}

\section{Examples}
\label{sec:examples}


\paragraph{Evaluation of the Next Operator:}
In Table~\ref{sec:finiteLTL:sec:Examples:X:methodEval} we illustrate the evaluation of the $\nextt$ operator nested in an $\eventually$ property and nested  in a $\always$ property.

Our approach focuses on observed past behavior and predicts evaluations of subformulas when possible. The prediction on $\nextt g$  is necessary to draw a conclusion on the eventually, respectively globally, property being violated, respectively satisfied.
For the trace in Table~\ref{sec:finiteLTL:sec:Examples:X:methodEval} (a) our approach results in the expected presumably false verdict, because we have always observed $\nextt g$  being violated and we do not expect it to be satisfied. For the trace in Table~\ref{sec:finiteLTL:sec:Examples:X:methodEval} (b) our approach results in the expected presumably true verdict, because we have always observed $\nextt g$ being satisfied and we do not expect it to be violated.

 \vspace{-2ex}
 
 \begin{table}
\centering
\caption{Evaluation of the $\nextt$ operator nested in an $\eventually$ and a $\always$ property.}
\label{sec:finiteLTL:sec:Examples:X:methodEval}
\subfloat[$\eventually   \nextt g$.] {\tiny
\begin{tabular}  { c | c c c c c c c c r}
 i  &  1 & 2 & 3 & 4  & EOT \\ 
\toprule
$g$& $ - $  & $ - $  & $ - $  & $ - $  \\ 
\midrule
$d_\pi(g,i)$   & $(-,0)$  & $(-,0)$  & $(-,0)$  & $(-,0)$  & $(0,0)$   &\multirow{2}{*}{(1)}\\ \\  
$e_\pi(g,i)$  & $ \bot $ & $ \bot $ & $ \bot $ & $ \bot $ &  $ \bot_P $ \\
\midrule
$d_\pi(\nextt g,i)$ & $(-,1)$  & $(-,1)$  & $(-,1)$  & $(1,1)$  & $(1,1)$   &\multirow{2}{*}{(2)}\\ \\  
$e_\pi(\nextt g,i)$   & $ \bot $ & $ \bot $ & $ \bot $  & $ \bot_P $ & $ \bot_P $ \\
\midrule
$d_\pi(\eventually   \nextt g,i)$  & $(4,\infty)$  & $(3,\infty)$  & $(2,\infty)$  & $(1,\infty)$  & $(1,\infty)$   &\multirow{2}{*}{(3)}\\ \\  
$e_\pi(\eventually   \nextt g,i)$ & $ \bot_P $ & $ \bot_P $ & $ \bot_P $ & $ \bot_P $ & $ \bot_P $ \\
\bottomrule
\end{tabular}
}
\qquad
\subfloat[$\always   \nextt g$.]{%
 {\tiny
\begin{tabular}  { c | c c c c c c c c r}
 i   & 1 & 2 & 3 & 4 & EOT \\ 
\toprule
$\pi_g$   & $ \top $  & $ \top $  & $ \top $  & $ \top $ \\ 
\midrule
$d_\pi(g,i)$& $(0,-)$  & $(0,-)$  & $(0,-)$  & $(0,-)$  & $(0,0)$  &\multirow{2}{*}{(1)}\\  \\  
$e_\pi(g,i)$  & $ \top $ & $ \top $ & $ \top $ & $ \top $ & $ \top_P $ &\\
\midrule
$d_\pi(\nextt g,i)$   & $(1,-)$  & $(1,-)$  & $(1,-)$  & $(1,1)$  & $(1,1)$  &\multirow{2}{*}{(2)}\\  \\  
$e_\pi(\nextt g,i)$ & $ \top $ & $ \top $ & $ \top $ & $ \top_P $ & $ \top_P $ \\
\midrule
$d_\pi(\always   \nextt g,i)$ & $(\infty,4)$  & $(\infty,3)$  & $(\infty,2)$  & $(\infty,1)$  & $(\infty,1)$  &\multirow{2}{*}{(3)}\\  \\  
$e_\pi(\always   \nextt g,i)$ & $ \top_P $ & $ \top_P $ & $ \top_P $ & $ \top_P $ & $ \top_P $ \\
\bottomrule
\end{tabular}
}}
\end{table}

\paragraph{Request/Acknowledge Properties:}
As a running example we have already illustrated the evaluation of trace $\pi_1$ from the motivation with the property
$$
\always(r \rightarrow \eventually g)\text{.}
$$
We now also evaluate the second trace from the motivation. 
In Table~\ref{sec:finiteLTL:sec:Examples:tab:response-pi2} we present the evaluation. While for many positions (like $i=5$) the signal $r$ dominates (because it is false and, thus, the implication is trivially satisfied) this is not the case for position $i=4$.  At this position the implication is not yet satisfied within the trace and, thus, can be  at earliest satisfied in 4 steps by extending the trace with $g=\true$ at $i=8$. However, the longest observed witness for satisfaction of the implication is at $i=1$ and requires two additional steps. As we've never observed a witness that requires at least 4 additional steps for a satisfaction, the suffix  at $i=4$ is concluded to be presumably false. Hence, the globally property is expected to be violated and we conclude that this trace is going to presumably violate the given property.

\begin{table}[htb]
\centering
\caption{Trace $\pi_2$ from the motivation.}
   {\tiny
\begin{tabular}{r cccccccc r}
\toprule
	i  &  1	& 2	& 3	& 4	& 5	& 6	& 7	  & EOT \\

\midrule
$r$    & $\top$ & $-$ & $-$ & $\top$ & $-$ & $-$ & $-$ & \\
$g$     &  $-$ & $-$ & $\top$ & $-$ & $-$ & $-$ & $-$ &  \\
\bottomrule
$\countsem(r,i)$ & $(0,-)$  & $(-,0)$  & $(-,0)$  & $(0,-)$  & $(-,0)$  & $(-,0)$  & $(-,0)$  & $(0,0)$  &\multirow{2}{*}{(1)}\\ 
$\predictive(r,i) $ & $ \top $ & $ \bot $ & $ \bot $ & $ \top $ & $ \bot $ & $ \bot $ & $ \bot $ & $ ? $ \\
\midrule
$\countsem(\neg r,i)$ & $(-,0)$  & $(0,-)$  & $(0,-)$  & $(-,0)$  & $(0,-)$  & $(0,-)$  & $(0,-)$  & $(0,0)$   &\multirow{2}{*}{(2)}\\ 
$\predictive(\neg r,i)$ & $ \bot $ & $ \top $ & $ \top $ & $ \bot $ & $ \top $ & $ \top $ & $ \top $ & $ ? $ \\
\midrule
$\countsem(g,i)$ & $(-,0)$  & $(-,0)$  & $(0,-)$  & $(-,0)$  & $(-,0)$  & $(-,0)$  & $(-,0)$  & $(0,0)$   &\multirow{2}{*}{(3)}\\ 
$\predictive(g,i)$  & $ \bot $ & $ \bot $ & $ \top $ & $ \bot $ & $ \bot $ & $ \bot $ & $ \bot $ & $ ? $ \\
\midrule
$\countsem(\eventually g,i)$ & $(2,-)$  & $(1,-)$  & $(0,-)$  & $(4,\infty)$  & $(3,\infty)$  & $(2,\infty)$  & $(1,\infty)$  & $(0,\infty)$  &\multirow{2}{*}{(4)}\\ 
$\predictive(\eventually g,i)$  & $ \top $ & $ \top $ & $ \top $ & $ \bot_P $ & $ \bot_P $ & $ \top_P $ & $ \top_P $ & $ \top_P $ \\
\midrule
$\countsem(r \rightarrow \eventually g,i)$ & $(2,-)$  & $(0,-)$  & $(0,-)$  & $(4,\infty)$  & $(0,-)$  & $(0,-)$  & $(0,-)$  & $(0,\infty)$   &\multirow{2}{*}{(5)}\\ 
$\predictive(r \rightarrow \eventually g,i)$ & $ \top $ & $ \top $ & $ \top $ & $ \bot_P $ & $ \top $ & $ \top $ & $ \top $ & $ \top_P $ \\
\midrule
$\countsem(\always(r \rightarrow \eventually g,i))$ & $(\infty,\infty)$  & $(\infty,\infty)$  & $(\infty,\infty)$  & $(\infty,\infty)$  & $(\infty,\infty)$  & $(\infty,\infty)$  & $(\infty,\infty)$  & $(\infty,\infty)$  &\multirow{2}{*}{(6)}\\ 
$\predictive(\always(r \rightarrow \eventually g,i))$ & $ \bot_P $ & $ \bot_P $ & $ \bot_P $ & $ \bot_P $ & $ \top_P $ & $ \top_P $ & $ \top_P $ & $ \top_P $\\
\bottomrule

\end{tabular}
\label{sec:finiteLTL:sec:Examples:tab:response-pi2}
}
\end{table}

Next we illustrate in Table~\ref{sec:finiteLTL:sec:Examples:tab:NeedForPrediction} why predictions on the different levels of subformulas are necessary. 
Note that the prediction for the property $\eventually g$ at Position~5 is $\ptrue$, because there exists a witness in the past (at Position~1) that required the same amount of additional steps for satisfaction. when evaluating the property $r \rightarrow \eventually g$, the prediction for the same Position becomes $\pfalse$, because now the longest witness (at Position~2) only requires one additional step, which is shorter than the required two additional steps (at Position~5).
This is, because the signal $g$ is related to the signal $r$, and at Position~1 the truth value of signal $r$ dominates. Human intuition supports this evaluation. 
While evaluating only $\eventually g$ allows the observer to conclude that it always takes two additional steps to observe the grant, this is not the case when evaluating $r \rightarrow \eventually g$. For this property, the signal $g$ is only relevant whenever a request $r$ is observed and then the grant $g$ is observed in one additional step.

\begin{table}[tb]
\centering
  \caption{Need for prediction of individual subformulas.} 
     {\scriptsize
  \begin{tabular}{ c  c c c c c c c r}
\toprule
 i  & 1 & 2 & 3 & 4 & 5 & 6 & EOT \\ 
 \midrule
$r$ & $ - $  & $ \top $  & $ - $  & $ - $  & $ \top $  & $ - $ \\ 
$g$ & $ - $  & $ - $  & $ \top $  & $ - $  & $ - $  & $ - $ \\ 
\bottomrule
$\countsem(r,i)$ & $(-,0)$  & $(0,-)$  & $(-,0)$  & $(-,0)$  & $(0,-)$  & $(-,0)$  & $(0,0)$ &\multirow{2}{*}{(1)}\\ 
$\predictive(r,i)$ & $ \bot $ & $ \top $ & $ \bot $ & $ \bot $ & $ \top $ & $ \bot $ & $ ? $ \\
\midrule
$\countsem(g,i)$ & $(-,0)$  & $(-,0)$  & $(0,-)$  & $(-,0)$  & $(-,0)$  & $(-,0)$  & $(0,0)$ &\multirow{2}{*}{(2)}\\ 
$\predictive(g,i)$ & $ \bot $ & $ \bot $ & $ \top $ & $ \bot $ & $ \bot $ & $ \bot $ & $ ? $ \\
\midrule
$\countsem(\eventually g,i)$ & $(2,-)$  & $(1,-)$  & $(0,-)$  & $(3,\infty)$  & $(2,\infty)$  & $(1,\infty)$  & $(0,\infty)$ &\multirow{2}{*}{(3)}\\ 
$\predictive(\eventually g,i)$ & $ \top $ & $ \top $ & $ \top $ & $ \pfalse $ & $ \ptrue $ & $ \ptrue $ & $ \ptrue $ \\
\midrule
$\countsem(r \rightarrow \eventually g,i)$ & $(0,-)$  & $(1,-)$  & $(0,-)$  & $(0,-)$  & $(2,\infty)$  & $(0,-)$  & $(0,\infty)$ &\multirow{2}{*}{(4)}\\  
$\predictive(r \rightarrow \eventually g,i)$ & $ \top $ & $ \top $ & $ \top $ & $ \top $ & $ \pfalse $ & $ \top $ & $ \ptrue $ \\
\midrule
$\countsem(\always(r \rightarrow \eventually g),i)$ & $(\infty,\infty)$  & $(\infty,\infty)$  & $(\infty,\infty)$  & $(\infty,\infty)$  & $(\infty,\infty)$  & $(\infty,\infty)$  & $(\infty,\infty)$ &\multirow{2}{*}{(5)}\\ 
$\predictive(\always(r \rightarrow \eventually g),i)$ & $ \pfalse $ & $ \pfalse $ & $ \pfalse $ & $ \pfalse $ & $ \pfalse $ & $ \ptrue $ & $ \ptrue $ \\
\midrule
  \end{tabular}
\label{sec:finiteLTL:sec:Examples:tab:NeedForPrediction}
}
\end{table}

In another request/acknowledge example we analyze the property
$$
\always ( r_1 \rightarrow \eventually g_1) \land \always ( r_2 \rightarrow \eventually g_2)
$$
with $r_1$ being triggered at even time steps, $r_2$ being triggered at odd time steps, and both requests being always granted after exactly one time step. No matter where you cut the trace there is always one request not yet granted (Table~\ref{sec:finiteLTL:sec:Examples:2GrFg:inputseq} illustrates an example trace).

The two request/grant properties are conjunct on the highest level of the formula. Our approach computes truth values for every subformula, i.e., computes independent predictions for both request/grant properties which is in both cases $\top_P$. On the highest level (no predictions are possible anymore at this level, because all computed pairs are of the form $(\infty,\infty)$) the computed truth values for the two request/grant properties are conjunct and result in the expected verdict  presumably true.

\begin{table}[tb]
            \centering
  \caption{Trace of a system claiming to implement $ \always (\neg r_1 \lor \eventually g_1) \land \always (\neg r_2 \lor \eventually g_2)$.}
\label{sec:finiteLTL:sec:Examples:2GrFg:inputseq}
   {\scriptsize
  \begin{tabular}{ c  c c c c c c c c c c c c c }
  \toprule
   &  1 & 2 & 3 & 4 & 5 & 6 & 7 & 8 & 9 & 10 & 11 & 12 & 13\\
  \midrule
 $r_1 $  & $ \top $  & $ - $       & $ \top $  & $ - $       & $ \top $  & $ - $       & $ \top$  & $ - $      & $ \top $  & $ - $       & $\top $ & $ - $     & $ \top $\\ 
 $g_1$ & $ - $      & $ \top $  & $ - $        & $ \top $  & $ - $       & $ \top $  & $ - $      & $ \top$  & $ - $       & $ \top $  & $ - $     & $\top $ & $ - $ \\ 
 $r_2$ & $ - $  & $ \top $  & $ - $  & $ \top $  & $ - $  & $ \top $  & $ - $  & $ \top$  & $ - $  & $ \top $  & $ - $  & $\top $ & $ - $ \\ 
 $g_2$ & $ - $  & $ - $  & $ \top $  & $ - $  & $ \top $  & $ - $  & $ \top$  & $ - $  & $ \top $  & $ - $  & $\top $ & $ - $ & $ \top $\\ 
\bottomrule
  \end{tabular}
}
 \end{table}

\paragraph{Evaluation of the Until Operator:} To illustrate our approach on a specification that contains an until operator, we consider the property
$$
\always ( (\nextt a) \until \nextt   \nextt b).
$$ 

\noindent Table~\ref{sec:finiteLTL:sec:Examples:tab:GXaUXXb:methodEval} shows an example trace and the associated evaluation. 
The longest observed witness for satisfaction of the until property starts at position~1 and requires six additional time steps. In positions~1,~2,~3 and~4 the subformula $\nextt a$ holds, until in position~5 the subformula $\nextt \nextt b$ holds.
The suffix of the trace from position~6 can be satisfied at earliest after 3 time steps by an extension of the trace with  $b=\top$ at $i=9$. As the suffix is shorter than 
the longest observed witness for satisfaction and we have not observed any violation, this inconclusive suffix is predicted to be 
presumably true. The same applies for the suffixes starting at $i=7$ and $i=8$. Thus, we neither observe nor expect a 
violation of the globally property. Hence, the property evaluates to $\top_P$ with respect to the given trace.

\begin{table}[tb]
\center
  \caption{Evaluation of $\always((\nextt a) \until \nextt   \nextt b)$.}   
   \label{sec:finiteLTL:sec:Examples:tab:GXaUXXb:methodEval}
{\tiny
  \begin{tabular}{ r  c c c c c c c c c c c r}
  \toprule
 i  & 1 & 2 & 3 & 4 & 5 & 6 & 7 & 8 & EOT \\ 
\midrule
$a$  & $-$ & $\top$  & $\top$  & $\top$  & $\top$ & $-$  & $\top$ & $\top$ \\
$b$ & $-$ & $\top$ & $-$ & $-$ & $-$ & $-$ & $\top$ & $-$ \\
\bottomrule
$\countsem(a,i)$ & $(-,0)$  & $(0,-)$  & $(0,-)$  & $(0,-)$  & $(0,-)$  & $(-,0)$  & $(0,-)$  & $(0,-)$  & $(0,0)$   &\multirow{2}{*}{(1)}\\ 
$\predictive(a,i)$   & $ \bot $ & $ \top $ & $ \top $ & $ \top $ & $ \top $ & $ \bot $ & $ \top $ & $ \top $ & $ ? $ \\
\midrule
$\countsem(\nextt a,i)$  & $(1,-)$  & $(1,-)$  & $(1,-)$  & $(1,-)$  & $(-,1)$  & $(1,-)$  & $(1,-)$  & $(1,1)$  & $(1,1)$   &\multirow{2}{*}{(2)}\\ 
$\predictive(\nextt a,i)$  & $ \top $ & $ \top $ & $ \top $ & $ \top $ & $ \bot $ & $ \top $ & $ \top $ & $ ? $ & $ ? $ \\
\midrule
$\countsem(b,i)$   & $(-,0)$  & $(0,-)$  & $(-,0)$  & $(-,0)$  & $(-,0)$  & $(-,0)$  & $(0,-)$  & $(-,0)$  & $(0,0)$  &\multirow{2}{*}{(3)}\\ 
$\predictive(b,i)$& $ \bot $ & $ \top $ & $ \bot $ & $ \bot $ & $ \bot $ & $ \bot $ & $ \top $ & $ \bot $ & $ ? $ \\
\midrule
$\countsem(\nextt b,i)$ & $(1,-)$  & $(-,1)$  & $(-,1)$  & $(-,1)$  & $(-,1)$  & $(1,-)$  & $(-,1)$  & $(1,1)$  & $(1,1)$   &\multirow{2}{*}{(4)}\\ 
$\predictive(\nextt b,i)$& $ \top $ & $ \bot $ & $ \bot $ & $ \bot $ & $ \bot $ & $ \top $ & $ \bot $ & $ ? $ & $ ? $ \\
\midrule
$\countsem(\nextt   \nextt b,i)$   & $(-,2)$  & $(-,2)$  & $(-,2)$  & $(-,2)$  & $(2,-)$  & $(-,2)$  & $(2,2)$  & $(2,2)$  & $(2,2)$   &\multirow{2}{*}{(5)}\\ 
$\predictive(\nextt   \nextt b,i)$  & $ \bot $ & $ \bot $ & $ \bot $ & $ \bot $ & $ \top $ & $ \bot $ & $ ? $ & $ ? $ & $ ? $ \\
\midrule
$\countsem(\nextt a \until \nextt   \nextt b,i)$  & $(6,-)$  & $(5,-)$  & $(4,-)$  & $(3,-)$  & $(2,-)$  & $(3,4)$  & $(2,3)$  & $(2,2)$  & $(2,2)$  &\multirow{2}{*}{(6)}\\ 
$\predictive(\nextt a \until \nextt   \nextt b,i)$ & $ \top $ & $ \top $ & $ \top $ & $ \top $ & $ \top $ & $ \top_P $ & $ \top_P $ & $ \top_P $ & $ \top_P $ \\
\midrule
$\countsem(\always(\nextt a \until \nextt   \nextt b),i)$ & $(\infty,9)$  & $(\infty,8)$  & $(\infty,7)$  & $(\infty,6)$  & $(\infty,5)$  & $(\infty,4)$  & $(\infty,3)$  & $(\infty,2)$  & $(\infty,2)$  &\multirow{2}{*}{(7)}\\ 
$\predictive(\always(\nextt a \until \nextt   \nextt b),i)$ & $ \top_P $ & $ \top_P $ & $ \top_P $ & $ \top_P $ & $ \top_P $ & $ \top_P $ & $ \top_P $ & $ \top_P $ & $ \top_P $ \\
\bottomrule
\end{tabular}
  }
\end{table}


\begin{table}[tb]
            \centering
  \caption{Traces of two systems that claim to implement $ \eventually   \always a \lor \eventually   \always \neg a$.}
\label{sec:finiteLTL:sec:Examples:FGaFGna:inputseq}
  {\scriptsize
  \begin{tabular}{ c  c c c c c c c c c c c c c}
  \toprule
   &  1 & 2 & 3 & 4 & 5 & 6 & 7 & 8 & 9 & 10 & 11 & 12 & 13\\
  \midrule
${\pi}_1$: a & $ \top $  & $ \top $  & $ - $  & $ - $  & $ \top $  & $ \top $  & $ - $  & $ - $  & $ \top $  & $ \top $  & $ -$ & $ - $ & $ \top $\\ 
\midrule
${\pi}_2$: a & $ \top $  & $ \top $  & $ - $  & $ - $  & $ \top $  & $ \top $  & $ - $  & $ - $  & $ \top $  & $ \top $  & $ \top$ & $ \top $ & $ \top $\\ 
\bottomrule
  \end{tabular}

}
\end{table}

\paragraph{Stabilization Properties:}
Consider the  property
$$
\eventually   \always a \lor  \eventually   \always \neg a
$$ that states that eventually the truth value of $a$ has to stabilize.

We analyze the traces presented in Table~\ref{sec:finiteLTL:sec:Examples:FGaFGna:inputseq}. While in trace $\pi_1$ the system seems to flip the truth value of $a$ always after time time steps, in trace $\pi_2$ the truth value of $a$ seems to remain stable from $i=9$ onwards.
Applying our approach, the first sequence ($\pi_1$) evaluates to presumably false because the suffix with one time $a=\top$ is shorter than a previous observed sequence of $a$s being stable (e.g. at position $i=1$ the truth value of $a$ was stable for two time steps). In the second sequence, the suffix with five times $a=\top$  is longer than any previous sequence of $a$s being stable and, thus, our approach evaluates this trace to presumably true.

These two examples also illustrate the importance of having a trace not truncated too early. Imagine cutting the trace at $i=5$ or $i=9$, then both traces evaluate to presumably false with respect to previously observed behavior, because we miss the observation of the long stable suffix.


 \begin{table}  [tb]
            \centering
  \caption{Evaluation of $\always(\eventually a \lor\eventually b)$.} 
   \label{sec:finiteLTL:sec:Examples:GFaFb1}  
  {\scriptsize
  \begin{tabular}{ c  c c c c c c c}
\toprule
 i  & 1 & 2 & 3 & 4 & 5 & 6 & EOT \\ 
 \midrule
a & $ \top $  & $ \top $  & $ \top $  & $ \top $  & $ - $  & $ - $ \\ 
b & $ \top $  & $ - $  & $ \top $  & $ - $  & $ \top $  & $ - $ \\ 
\bottomrule
$\countsem(a,i)$ & $(0,-)$  & $(0,-)$  & $(0,-)$  & $(0,-)$  & $(-,0)$  & $(-,0)$  & $(0,0)$  \\ \nopagebreak 
$\predictive(a,i)$ & $ \top $ & $ \top $ & $ \top $ & $ \top $ & $ \bot $ & $ \bot $ & $ ? $ \\
\midrule
$\countsem(\eventually a,i)$ & $(0,-)$  & $(0,-)$  & $(0,-)$  & $(0,-)$  & $(2,\infty)$  & $(1,\infty)$  & $(0,\infty)$  \\ \nopagebreak 
$\predictive(\eventually a,i)$ & $ \top $ & $ \top $ & $ \top $ & $ \top $ & $ \bot_P $ & $ \bot_P $ & $ \top_P $ \\
\midrule
$\countsem(b,i)$ & $(0,-)$  & $(-,0)$  & $(0,-)$  & $(-,0)$  & $(0,-)$  & $(-,0)$  & $(0,0)$  \\ \nopagebreak 
$\predictive(b,i)$ & $ \top $ & $ \bot $ & $ \top $ & $ \bot $ & $ \top $ & $ \bot $ & $ ? $ \\
\midrule
$\countsem(\eventually b,i)$ & $(0,-)$  & $(1,-)$  & $(0,-)$  & $(1,-)$  & $(0,-)$  & $(1,\infty)$  & $(0,\infty)$  \\ \nopagebreak 
$\predictive(\eventually b,i)$ & $ \top $ & $ \top $ & $ \top $ & $ \top $ & $ \top $ & $ \top_P $ & $ \top_P $ \\
\midrule
$\countsem(\eventually a \lor \eventually b,i)$ & $(0,-)$  & $(0,-)$  & $(0,-)$  & $(0,-)$  & $(0,-)$  & $(1,\infty)$  & $(0,\infty)$  \\ \nopagebreak 
$\predictive(\eventually a \lor\eventually b,i)$ & $ \top $ & $ \top $ & $ \top $ & $ \top $ & $ \top $ & $ \bot_P $ & $ \top_P $ \\
\midrule
$\countsem(\always(\eventually a \lor\eventually b),i)$ & $(\infty,\infty)$  & $(\infty,\infty)$  & $(\infty,\infty)$  & $(\infty,\infty)$  & $(\infty,\infty)$  & $(\infty,\infty)$  & $(\infty,\infty)$  \\ \nopagebreak 
$\predictive(\always(\eventually a \lor\eventually b),i)$ & $ \bot_P $ & $ \bot_P $ & $ \bot_P $ & $ \bot_P $ & $ \bot_P $ & $ \bot_P $ & $ \top_P $ \\
\bottomrule
  \end{tabular}
  }
\end{table}

 \begin{table}  [tb]
            \centering
  \caption{Evaluation of $\always \eventually a \lor \always \eventually b$.} 
   \label{sec:finiteLTL:sec:Examples:GFaFb2}  
  {\scriptsize
  \begin{tabular}{ c  c c c c c c c}
\toprule
 i  & 1 & 2 & 3 & 4 & 5 & 6 & EOT \\ 
 \midrule
a & $ \top $  & $ \top $  & $ \top $  & $ \top $  & $ - $  & $ - $ \\ 
b & $ \top $  & $ - $  & $ \top $  & $ - $  & $ \top $  & $ - $ \\ 
\bottomrule
$\countsem(a,i)$ & $(0,-)$  & $(0,-)$  & $(0,-)$  & $(0,-)$  & $(-,0)$  & $(-,0)$  & $(0,0)$  \\ \nopagebreak 
$\predictive(a,i)$ & $ \top $ & $ \top $ & $ \top $ & $ \top $ & $ \bot $ & $ \bot $ & $ ? $ \\
\midrule
$\countsem(\eventually a,i)$ & $(0,-)$  & $(0,-)$  & $(0,-)$  & $(0,-)$  & $(2,\infty)$  & $(1,\infty)$  & $(0,\infty)$  \\ \nopagebreak 
$\predictive(\eventually a,i)$ & $ \top $ & $ \top $ & $ \top $ & $ \top $ & $ \bot_P $ & $ \bot_P $ & $ \top_P $ \\
\midrule
$\countsem(\always \eventually a,i)$ & $(\infty,\infty)$  & $(\infty,\infty)$  & $(\infty,\infty)$  & $(\infty,\infty)$  & $(\infty,\infty)$  & $(\infty,\infty)$  & $(\infty,\infty)$  \\ \nopagebreak 
$\predictive(\always \eventually a,i)$ & $ \bot_P $ & $ \bot_P $ & $ \bot_P $ & $ \bot_P $ & $ \bot_P $ & $ \bot_P $ & $ \top_P $ \\
\midrule
$\countsem(b,i)$ & $(0,-)$  & $(-,0)$  & $(0,-)$  & $(-,0)$  & $(0,-)$  & $(-,0)$  & $(0,0)$  \\ \nopagebreak 
$\predictive(b,i)$ & $ \top $ & $ \bot $ & $ \top $ & $ \bot $ & $ \top $ & $ \bot $ & $ ? $ \\
\midrule
$\countsem(\eventually b,i)$ & $(0,-)$  & $(1,-)$  & $(0,-)$  & $(1,-)$  & $(0,-)$  & $(1,\infty)$  & $(0,\infty)$  \\ \nopagebreak 
$\predictive(\eventually b,i)$ & $ \top $ & $ \top $ & $ \top $ & $ \top $ & $ \top $ & $ \top_P $ & $ \top_P $ \\
\midrule
$\countsem(\always \eventually b,i)$ & $(\infty,\infty)$  & $(\infty,\infty)$  & $(\infty,\infty)$  & $(\infty,\infty)$  & $(\infty,\infty)$  & $(\infty,\infty)$  & $(\infty,\infty)$  \\ \nopagebreak 
$\predictive(\always \eventually b,i)$ & $ \top_P $ & $ \top_P $ & $ \top_P $ & $ \top_P $ & $ \top_P $ & $ \top_P $ & $ \top_P $ \\
\midrule
$\countsem(\always \eventually a \lor \always \eventually b,i)$ & $(\infty,\infty)$  & $(\infty,\infty)$  & $(\infty,\infty)$  & $(\infty,\infty)$  & $(\infty,\infty)$  & $(\infty,\infty)$  & $(\infty,\infty)$  \\ \nopagebreak 
$\predictive(\always \eventually a \lor \always \eventually b,i)$ & $ \top_P $ & $ \top_P $ & $ \top_P $ & $ \top_P $ & $ \top_P $ & $ \top_P $ & $ \top_P $ \\
\bottomrule
  \end{tabular}
  }
\end{table}

  
 \begin{table}[tb]
            \centering
  \caption{Trace where evaluations differ for semantically equivalent specifications.}
\label{sec:finiteLTL:sec:Examples:inoutseq:shortcoming}
  {\scriptsize
  \begin{tabular}{ c  c c c c c c c c c c}
  \toprule
   & 1 & 2 & 3 & 4 & 5 & 6  \\ 
 \midrule
 a & $ \top $  & $ - $  & $ \top $  & $ - $  & $ - $  & $ - $ \\ 
b & $ - $  & $ - $  & $ - $  & $ \top $  & $ \top $  & $ \top $ \\ 
 \bottomrule
\end{tabular}
}
 \end{table}

\paragraph{When one subformula dominates:}
We now discuss a shortcoming of our approach. Consider the following specification
$$
\phi= \always(\eventually a \lor \eventually b).
$$
This specification requires that for any index $i$ either signal $a$ evaluates to $\true$ now or at a future position or, otherwise, signal $b$ evaluates to $\true$ now or at a future position. In Table~\ref{sec:finiteLTL:sec:Examples:GFaFb1} we see that our approach concludes the trace under evaluation to presumably false. This is not what we would expect, as for positions smaller than or equal to 4, the formula $\eventually a$ is always satisfied immediately in the same time step and for all observed positions $i\leq 5$ the formula $\eventually b$ is satisfied within in at most one additional time step. In position $i=6$ our approach predicts the formula $\eventually a \lor \eventually b$ to be presumably false, because the shorter witness for satisfaction dominates and, as both of the subformulas are eventually properties, none of them can be violated in finite time. Thus, the globally property is predicted to be violated which results in the evaluation of presumably false.

Intuitively, $\phi$ requires in every time step to eventually raise one of the two signals, i.e., one interpretation is that only the faster satisfaction counts. The specification $\phi'= \always\eventually( a \lor b)$ is semantically equivalent to $\phi$ and expresses this interpretation formally and (also) evaluates to presumably false.

On the other side, if we rewrite $\phi$ to 
$$
\phi'' = \always \eventually a \lor \always \eventually b,
$$ which is again semantically equivalent to $\phi$, then the conclusion is presumably true (see Table~\ref{sec:finiteLTL:sec:Examples:GFaFb2}), which is what we would expect. Thus, there is a difference in the interpretation of $\phi$ (and $\phi'$) and $\phi''$. 
The specification $\phi''$ can be interpreted such that the system only has to satisfy one of the two formulas $\always \eventually a$ and $\always \eventually b$, as those to formulas are connected with a logical or. Thus, the violation of one of the globally properties still allows the specification to be presumably satisfied (by the other globally).

Another example for two specifications that are semantically equivalent, but can be interpreted in different ways is:
\begin{align*}
\psi &= \always(\eventually a \lor \always b)\\
\psi' &=\always\eventually a \lor (\eventually a \until \always b)
\end{align*}

While in specification $\psi$ the formula $\eventually a$ dominates, because the formula $\always b$ cannot be satisfied in finite time, the rewriting to $\psi'$ eliminates this dominating factor. Thus, for the trace presented in Table~\ref{sec:finiteLTL:sec:Examples:inoutseq:shortcoming}, evaluating $\psi$ results in presumably false and evaluating $\psi'$ results in presumably true.


\paragraph{System implements the specification in different modes:}
In the above examples we've shown a weakness of our approach that arises from a dominating subformula.
The specifications with dominating subformulas for which our predictions fail have in common that they implicitly allow systems to operate in two modes and (eventually) switch from one mode to the other. 

Our approach may also fail for a system that operates in different modes when the mode is not part of the specification, e.g., a system that has a high- and a low-performance mode. 
Consider a system that implements the low-performance mode in such a way that the system takes longer to react (without violating the specification).
When the trace contains system behavior of both modes, i.e., the high-performance and the low-performance mode, then our prediction is built on the behavior of the low-performance mode (assuming that witnesses are longer here), as we look at the longest observed witness for satisfaction. Thus, at some point predictions in the high-performance mode may be incorrect.


 \paragraph{Shortcoming of our Approach:}

Consider the specification $\always \eventually p$ and a system that raises $p$ in the time steps $1,2,4, \dots, 2^i$ with $i= 3 \dots \infty$.
As the distance for the next satisfaction of $\eventually p$ always doubles, we will give a wrong evaluation in half of the case.
The reason for the wrong evaluation is that we have not yet observed witnesses with similar lengths for the second half of the last (doubled) distance to the (not yet observed) satisfaction of the eventually part.

\end{document}